\newtheoremstyle{mytheoremstyle} 
    {\topsep}                    
    {\topsep}                    
    {}                   
    {}                           
    {\itshape}                   
    {.}                          
    {.5em}                       
    {}  
\theoremstyle{mytheoremstyle}
\newtheorem{lemma}{Lemma}
\newtheorem{result}{Result}
\newtheorem{definition}{Definition}
\newtheorem{corollary}{Corollary}
\tikzstyle{none}=[]
\tikzstyle{trace}=[circuit ee IEC,thick,ground,rotate=0,scale=2]
\begin{document}
\widetext

\title{Entanglement is necessary for emergent classicality in all physical theories}

\author{Jonathan G. Richens}\thanks{These authors contributed equally to this work}
\affiliation{Controlled Quantum Dynamics theory group, Department of Physics, Imperial College London, London SW7 2AZ, UK.}
\affiliation{Department of Physics and Astronomy, University College London,
Gower Street, London WC1E 6BT, UK.}

\author{John H. Selby\footnotemark[1]}
\affiliation{Controlled Quantum Dynamics theory group, Department of Physics, Imperial College London, London SW7 2AZ, UK.}
\affiliation{University of Oxford, Department of Computer Science, Oxford OX1 3QD, UK.}

\author{Sabri W. Al-Safi}
\affiliation{School of Science \& Technology, Nottingham Trent University, Burton Street, Nottingham NG1 4BU, UK}

\begin{abstract}
One of the most striking features of quantum theory is the existence of entangled states, responsible for Einstein's so called ``spooky action at a distance''. These states emerge from the mathematical formalism of quantum theory, but to date we do not have a clear idea of the physical principles that give rise to entanglement.
Why does nature have entangled states? Would any theory superseding classical theory have entangled states, or is quantum theory special? One important feature of quantum theory is that it has a classical limit, recovering classical theory through the process of decoherence. We show that any theory with a classical limit must contain entangled states, thus establishing entanglement as an inevitable feature of any theory superseding classical theory.
\end{abstract}

\maketitle

\section{Introduction}

Entanglement and non-locality are two of the features of quantum theory that clash most strongly with our classical preconceptions as to how the universe works. In particular, they create a tension with the other major theory of the twentieth century: relativity~\cite{einstein1916foundation}. This is most clearly illustrated by Bell's theorem~\cite{bell1966physics,wood2015lesson}, in which certain entangled states are shown to violate local realism by allowing for correlations that cannot be explained by classical causal structures \cite{wood2015lesson}.
In this paper we ask whether entanglement is a surprising feature of nature, or whether it should be expected in any non-classical theory? Could a scientist with no knowledge of quantum theory have predicted the existence of entangled states based solely on the premise that their classical understanding of the world was incomplete?

Any such scientist could reasonably postulate the existence of a classical regime - in that whatever theory describes reality must be able to behave like classical theory in some limit. Although this is a very natural assumption, given that we frequently observe systems behaving classically, we show that it imparts very strong constraints on the structure of any non-classical theory. Indeed in  \cite{landau1977quantum}, Landau and Lifshitz noted

\begin{quote}
``Quantum mechanics occupies a very unusual place among physical theories: it contains classical mechanics as a limiting case, yet at the same time it requires this limiting case for its own formulation''
\end{quote}

Thus to answer these questions, we explore all theories that have a classical limit \cite{zurek2009quantum,joos2013decoherence}. This is formalised in quantum theory by \emph{decoherence maps}, which take quantum systems to semi-classical states with respect to some basis. Physically, decoherence maps represents a quantum system interacting with some inaccessible environment resulting in the loss of quantum coherences. Inspired by this we develop a generalization of decoherence maps for arbitrary operationally defined theories (see \cite{gogioso2017categorical,selby2017leaks,lee2017no,coecke2017twoTEST} for a related categorical/process-theoretic approaches.). We consider all theories that can decohere to classical theory and show that any such theory either contains entangled states, or is classical theory. Thus, the existence of these classically counter-intuitive entangled states present in quantum theory can be understood as arising from, and being necessary for, the existence of a classical world. This result hints towards the possibility that other counter-intuitive features of quantum theory could be derived from its accommodation of a classical limit, and paves the way for deriving the features of post-classical and post-quantum theories from the existence of this limit.

The outline of this paper is as follows. In the following section we discuss the framework, describe the minimal characteristics expected of the generalised decoherence map and introduce the class of theories that can exhibit a classical limit through decoherence.
In the results section we formally state and outline a proof of our result and in the conclusion section we discuss the physical significance of our result. All technical proofs are given in the corresponding appendices.

\section{Setup}

To begin to pose questions about how different physical features of theories relate we makes use of the generalised probabilistic theories (GPT) framework \cite{hardy2001quantum,barrett2007information,chiribella2010probabilistic,hardy2011reformulating,masanes2006general}, which is broad enough to describe any operational description of nature. The framework is based on the idea that any physical theory must be able to predict the outcomes of experiments, and moreover, that the theory should have an operational description in terms of those experiments. This framework is broad enough to describe arbitrary operationally defined theories including but not limited to quantum and classical theory. We provide a brief introduction to the framework and our notation below. For a full introduction to this framework see \cite{barrett2007information,chiribella2011informational}.

A primitive notion in this framework is the idea of a system $\mathcal{S}$, corresponding for example in quantum theory to an $n$-level quantum system. Such a system can be prepared in a variety of ways and so has an associated set of \emph{states} $\Omega_\mathcal{S}$. One can perform measurements on the system to determine which state it has been prepared in, and the measurement outcomes are known as \emph{effects} $e\in \mathcal{E}_\mathcal{S}$ which are maps $e:\Omega_{\mathcal{S}}\to [0,1]$ determining the probability that outcome $e$ is observed given the system was in state $s$. Moreover, there will generally be \emph{transformations} $T\in\mathcal{T}_\mathcal{S}$ that can be applied to the system, note that if $s\in \Omega_{\mathcal{S}}$ then $T\circ s \in \Omega_{\mathcal{S}}$ and moreover that if $e\in \mathcal{E}_\mathcal{S}$ then $e\circ T \in \mathcal{E}_\mathcal{S}$. Transformations are said to be \emph{reversible} $T\in \mathcal{G}_\mathcal{S}$ if $T^{-1}$ is also a valid transformation where $T\circ T^{-1}=\mathds{1}=T^{-1}\circ T$.

Based on operational ideas we find that these sets of states, effects and transformations have much more structure. Specifically, the state space has the structure of a finite-dimensional convex set. \emph{Convexity} corresponds to the idea that if one can prepare a system in state $s_1$ or $s_2$ then it should be possible to prepare a probabilistic mixture of these two states, for example, conditioned on the outcome of a biased coin flip. If a state can we written as a convex combination of other states $\rho = \sum_i p_i s_i$ we say that $s_i$ \emph{refines} $\rho$ (denoted $s_i \succ \rho$ $\forall$ $i$ \citep{chiribella2010probabilistic}). States that cannot be refined by any other states are called \emph{pure}, otherwise they are called \emph{mixed}. Not all of the well defined measures of purity in quantum and classical theory will translate to general theories, but there is a sufficient condition for if one state is purer than another that applies to all convex theories. In general, if state $\sigma$ can be written as a probabilistic mixture involving $\rho$, but not vice versa (e.g. $\rho \succ \sigma$, $\sigma \not\succ \rho$), then $\rho$ is strictly purer than $\sigma$. Operationally, $\sigma$ can be prepared by an experiment that prepares a probabilistic mixture of states including $\rho$ , but the converse is not true for $\rho$.

Finite dimensionality comes from the requirement that it should be possible to characterise the state of a system by performing only a finite number of distinct experiments. Moreover this state space is typically assumed to be compact and closed. Transformations and effects should respect this convex structure, for example, probabilistically preparing state $s_1$ or $s_2$ followed by applying some transformation $T$ should be operationally equivalent to probabilistically preparing state $T\circ s_1$ or $T\circ s_2$. This implies that transformations and effects should be \emph{linear} maps  \footnote{or can at least with out loss of generality be extended to}.

It is typically useful not to only consider the physical states of the system, but also, to consider sub- and super-normalised states. This extends the state space from a convex set $\Omega_\mathcal{S}$ in a $d$-dimensional vector space to a convex cone $\mathcal{K}_{\mathcal{S}}$ living in a $d+1$ dimensional vector space. The state space $\Omega_\mathcal{S}$ is recovered by enforcing normalization via the deterministic effect $u_\mathcal{S}$. That is that states $s\in \mathcal{K}_\mathcal{S}$ are normalised, and thus belongs to in $\Omega_\mathcal{S}$, if $u_{\mathcal{S}}(s)=1$, sub-normalised if $u_{\mathcal{S}}(s)<1$ and supernormalised otherwise. Effects now extend to linear maps $e:\mathcal{K}_\mathcal{S} \to \mathds{R}^+$ and transformations extend to linear maps on the cone. Reversible transformations therefore must be automorphisms of the cone that preserve the normalised state space. Beyond these minimal assumptions, we place no further constraints on the state space $\Omega_\mathcal S$ which can take the form of arbitrary convex sets (see the supplementary materials for examples). In the statement of our results we use the notion of the \emph{faces} of a convex set. These are defined in the supplementary materials but can be understood intuitively as the convex subsets that form the boundary of the convex set. For example, for the three dimensional cube the faces are the squares, edges and vertices on the boundary of the cube.

The above is best illustrate with examples, the key examples here being quantum theory and classical probability theory. Given an $n$-level quantum system the convex cone is given by the set of positive semi-definite Hermitian matrices and the deterministic effect by the trace, such that normalised states are density matrices. Given an $n$-level classical system the convex cone is given by real vectors with non-negative entries and the deterministic effect by the covector with all entries $1$ such that normalised states correspond to probability distributions over an $n$ element set. Effects are then linear functionals on these cones in quantum theory corresponding to POVM elements and in classical theory to covectors with elements $\leq 1$. Transformations are then linear maps between these cones, in quantum theory corresponding to CP maps and in classical theory to sub-stochastic matrices. Reversible transformations are then cone-automorphisms that preserve the normalised states; in quantum theory these will be unitary transformations and in classical theory permutations of the underlying set.

There is a final key aspect of a theory that we are yet to discuss, and that is how to form \emph{composite systems}. Given two systems $\mathcal{S}_1$ and $\mathcal{S}_2$ with their associated state spaces/cones, effects and transformations, there should be a way to form a composite system, denoted $\mathcal{S}_1\otimes\mathcal{S}_2$. Note that here we use the symbol $\otimes$ to denote the construction of a bipartite system, which need not be related to the vector space tensor product \footnote{Under the assumption of Local Tomography \cite{hardy2001quantum} the vector spaces containing the cones will indeed compose under the standard vector space tensor product, but, there is still freedom to choose how the bipartite cone within this composite vector space \cite{barrett2007information} .}.  There are various operational constraints on this product $\otimes$ \cite{barrett2007information}, for example, that if one can prepare system $\mathcal{S}_1$ in state $s_1$ and $\mathcal{S}_2$ in state $s_2$ then there should be a state, denoted $s_1\otimes s_2$ which represents independently preparing the two systems in these two states. Similar statements and constraints can be made for the effects and transformations of the theory. These operational constraints however do not uniquely specify a way to form composite systems, as generally composite systems allow for more than just doing things on each system independently. Given local state spaces $\mathcal S_1$ and $\mathcal S_2$ there are therefore many different possible composite systems that could be formed (see for example \cite{barrett2007information}). An important feature of these composites is whether or not the bipartite state spaces exhibit entanglement, which we now define.

\begin{definition}[Entanglement]
A state $\psi$ belonging to the bipartite state space $\mathcal S_1\otimes \mathcal S_2$  is entangled iff it cannot be written in the following form
\[\psi = \sum_i p_i s_i \otimes s'_i \quad \sum\limits_i p_i =1, \ \, p_i\geq 0 \, \ \forall\, i \]
where $s_i\in \Omega_{\mathcal S_1}$, $s'_i\in\Omega_{\mathcal S_2}$, i.e. a state is entangled if it cannot be seen as the convex combination of product states.
\end{definition}

In this paper we will show that entangled states are a feature of any non-classical theory which can decohere to classical theory. We prove this by showing that any theory without entanglement that decoheres to classical theory must be classical theory itself. As such, we need a way to define classical theory and theories without entanglement. Constraining a theory to have no entanglement is equivalent to fixing a particular choice of tensor product for the theory, that is, the \emph{min-tensor product} \cite{barrett2007information}. Therefore rather than defining the general requirements of a tensor product we will just consider this particular case.

\begin{definition}[Min-tensor product $\boxtimes$]
The min-tensor product for combining systems $\mathcal A$ and $\mathcal B$ is defined by:
\[\mathcal{K}_{\mathcal{A}\boxtimes\mathcal{B}}:=\mathsf{Conv}\left[\left\{a\otimes b \middle| a\in \mathcal{K}_\mathcal{A}, b\in\mathcal{K}_\mathcal{B}\right\}\right]\]
\[u_{\mathcal{A}\boxtimes\mathcal{B}}=u_\mathcal{A}\otimes u_\mathcal{B}\]
where here $\otimes$ is the vector space tensor product.
\end{definition}

Note that as classical theory exhibits no entanglement, system compose under the minimal tensor product, $\otimes = \boxtimes$. We can now define theories without entanglement and classical theory.

\begin{definition}[Generalised probabilistic theory without entanglement]
A GPT without entanglement is defined by  a collection of systems $\{\mathcal{S}\}$, their associated effects $\mathcal{E}_\mathcal{S}$, and transformations between them $\mathcal{T}_{\mathcal{S}\to\mathcal{S'}}$, and, the composite of systems $\mathcal{S}$ and $\mathcal{S}'$ is given by the min-tensor product, $\mathcal{S}\boxtimes\mathcal{S}'$.
\end{definition}

\begin{definition}[Classical probabilistic theory]\label{def classical}
An $N$-level classical system, denoted $\Delta_N$ has a state space which is an $N$ vertex simplex. These compose under the min-tensor product and satisfy:
\[\Delta_N\boxtimes \Delta_M = \Delta_{NM}\]
Reversible transformations correspond to permutations of the vertices of the simplex. Effects are any linear functional $e : \mathcal K_\mathcal{S} \to \mathds{R}^+, \, e: \Omega_\mathcal{S}\to [0,1]$.
\end{definition}

An interesting feature of classical and quantum theory is that they obey the \emph{no-restriction hypothesis} \cite{chiribella2010probabilistic,janotta2013generalized}, which states all mathematically well defined effects are allowed in the theory and can be experimentally realised. We do not make this assumption when considering theories that can decohere to classical theory. Finally we must define and characterise the generalised decoherence-to-classical map, which we discuss in the following section.

\section{Decoherence}
It is physically well motivated to postulate that, in any reasonable theory of nature, be it quantum or post-quantum, systems must be able to behave classically. Indeed the GPT framework is fundamentally built on the assumption that we have a \emph{classical interface} with the world. We can choose, potentially using classical randomness, which experiment to perform, and we can characterise states, effects and transformations in terms of classical probability distributions that we obtain from experiments. However, ultimately this classical interface should be explainable from the theory itself rather than just being an external structure. This is indeed the case in quantum theory, where we can view the classical interface as an effective description of \emph{decohered} quantum systems. It therefore seems like any well-founded GPT should have an analogous decoherence mechanism so as to explain how it gives rise to the classical interface. We now consider the key features of quantum to classical decoherence which we then take to define decoherence for GPTs. See \cite{gogioso2017categorical,coecke2017twoTEST,selby2017leaks} for closely related approaches to decoherence in generalised theories.

For each quantum system $\mathcal{Q}$ there is a decoherence map $D_\mathcal{Q}$ and classical system with state space $\Delta_{N(\mathcal{Q})}$ where the decoherence map is given by $D_\mathcal{Q}[\rho]:= \sum_{i=1}^N\bra{i}\rho\ket{i}\ket{i}\bra{i}$. This map has the following key properties:

\begin{definition}[Decoherence maps]\label{def dec}
Purely decoherence maps, in quantum and general theories, obey the following properties
\begin{itemize}
\item[1.] Physicality: the decoherence map is a physical map, typically considered to be arising from an interaction with some environmental system that is then discarded, and hence must satisfy all of the constraints on transformations in a GPT. In particular, it must be \emph{linear} and map states to states.
\item[2.] Idempotence: in quantum theory the decoherence map destroys the coherences between the basis states, and so applying it a second time does nothing to the state. In general, the decoherence map should restrict the state space to a classical subspace which is invariant under repeat applications. Therefore, applying it twice is the same as applying it once and $D_\mathcal{S}[D_\mathcal{S}[\sigma]]=D_\mathcal{S}[\sigma]$ $\forall$ $\sigma \in \Omega_\mathcal{S}$.
\item[3.] Purity-decreasing: the decoherence map arises from losing information to an environment and as such it cannot increase our knowledge of the input state. Therefore $D[\rho]$ cannot be strictly purer than $\rho$ for any input state $\rho$. For example in quantum theory a decoherence map will not map mixed states to pure states. In general a state $\rho$ is strictly purer than state $\sigma$ if $\rho \succ \sigma$ and $\sigma \not\succ \rho$. Therefore if $D_\mathcal{S}[\rho]\succ \rho$, then $\rho \succ D_\mathcal{S}[\rho]$, else $D_\mathcal{S}[\rho]$ is strictly purer that $\rho$.
\end{itemize}

\end{definition}
In general this map could decohere to any sub-theory. However, we are interested in particular with decoherence maps that take systems $\mathcal S$ to classical systems.

\begin{definition}[Decoherence to classical theory]\label{def dec clas}
A theory decoheres to classical theory if it has a decoherence map (Definition 5) for each system  which obey the following
\begin{itemize}
\item[1.] State space: the most obvious constraint is that image of the decoherence map is a classical state space: $$ D_{\mathcal S}(\Omega_\mathcal S)=\Delta_{N(\mathcal{S})}$$ However, we don't just want to reproduce the states of classical theory, but the full theory including its dynamical and probabilistic structure.
\item[2.] Effect space: classical effects should also arise from the original theory. That is, that for every effect in classical theory there is some effect in the full theory that behaves as the classical effect when we restrict to $\Delta_{N(\mathcal{\mathcal S})}$. This can be formalised as for all $ \ e_\text{classical}$ there exists $e\in \mathcal E_\mathcal{S}$ such that $$ \ e_\text{classical}=e\circ D_{\mathcal{S}}$$
\item[3.] (Reversible) Transformations: similarly for (reversible) transformations we expect for any classical (reversible) transformation $t$ there is a corresponding post-classical (reversible) transformation with the same action on the image of $D_\mathcal{S}$. This can be formalised as for all classical reversible transformations $T_\text{classical}$ there exists some reversible $T\in \mathcal T_\mathcal{S}$ such that $$T_\text{classical} = T \circ D_\mathcal{S} $$
\item[4.] Composites: finally, we expect decoherence to act suitably with composition, i.e. if system $\mathcal{S}_1$ decoheres to $\Delta_N$ via $D_{\mathcal{S}_1}$ and system $\mathcal{S}_2$ to $\Delta_M$ via $D_{\mathcal{S}_2}$, then the composite system $\mathcal{S}_1\otimes\mathcal{S}_2$ can decohere to $\Delta_N\otimes\Delta_M=\Delta_{NM}$ via $D_{\mathcal{S}_1}\otimes D_{\mathcal{S}_2}$.
\end{itemize}
\end{definition}

To give an example consider quantum theory for qubits, which decohere to classical theory for bits by applying the standard dephasing map in (for example) the computational basis $\{\left| 0 \right> \!\! \left< 0 \right|,\left| 1 \right> \!\! \left< 1 \right|  \}$. Measurements in the computational basis provide the classical measurements, utilising the quantum effects $\left| 0 \right> \!\! \left< 0 \right|$ and $\left| 1 \right> \!\! \left< 1 \right|$. Two qubits can decohere independently to classical bits, which can then interact and all permutations of composite bits can be achieved by the classical c-not and bit flip operations, which are provided by the quantum c-not unitary and the Pauli-X unitary. Also note that the dephasing map obeys all conditions in Definition \ref{def dec}.

\section{Results}
We are now in a position to prove our main result. If a theory can decohere to classical theory and does not have entanglement, then the original systems must be composites including a classical system, $\Omega_\mathcal{S}=\Delta_N \otimes \Omega_f$, and the decoherence map simply discards any non-classical subsystems. More succinctly,

\begin{quote}\begin{centering}
\emph{Theories with non-trivial decoherence\\ must have entangled states.\\}
\end{centering}\end{quote}

The proof of this is provided in the appendix along with all necessary mathematical definitions and background to understand the proof. However, we will also provide an outline of the proof here. We first show -- by considering the consequences of decoherence for single systems -- that the state space $\Omega_\mathcal{S}$ has the following geometric properties

\noindent\textbf{Result 1:} \emph{If $D_\mathcal{S}[\Omega_{\mathcal S}]=\Delta_N$ where $D_\mathcal{S}$ obeys Definitions \ref{def dec} and \ref{def dec clas} then the state space $\Omega_\mathcal{S}$ has the following properties}
\begin{itemize}
\item[1.] $\Omega_\mathcal{S}$ is the minimal face of a set of faces $f_1, f_2, \dots , f_N$ that are isomorphic $f_i\cong f_j$ $\forall$ $i,j$, disjoint (share no states) $f_i\cap f_j =\emptyset$ $\forall$ $i,j$ and are exposed
\item[2.] each face $f_i$ decoheres uniquely to a pure classical state $s_i$, $D_\mathcal{S}[f_i]=s_i$.
\end{itemize}

We then consider the consequences of decoherence on composite systems. Essentially, as the resulting classical systems must be able to interact under classical dynamics we deduce the following additional constraints

\noindent\textbf{Result 2:} \emph{If $D_{\mathcal{S}}[\Omega_\mathcal S]=\Delta_N$ where $D_{\mathcal{S}}$ obeys Definitions \ref{def dec} and \ref{def dec clas} and $D_{\mathcal{S}}[\Omega_\mathcal S]\boxtimes D_{\mathcal{S}}[\Omega_\mathcal S]=\Delta_{N^2}$ which enjoys the full set of classical dynamics in Definition \ref{def classical}, then}
\begin{itemize}
\item[1.] The classical faces are linearly independent, $\mathsf{Span}[f_i] \cap \mathsf{Span}[\bigcup\limits_{k\neq i} f_k] = \{ 0 \}\ \ \forall i$
\item[2.] $\Omega_\mathcal{S}$ is the convex hull of these classical faces $\Omega_\mathcal{S}=\bigvee_i f_i = \mathsf{Conv}[\{f_i\}]$
\end{itemize}
Given these constraints on the state space it is then simple to show that it is a composite of a classical state space with a state space isomorphic to these faces, and the decoherence map simply discards the non-classical subsystem

\noindent\textbf{Result 3:} \emph{For any theory that decoheres to classical theory as per Definitions \ref{def dec} and \ref{def dec clas}, and whose composition rule $\otimes$ is given by the minimal tensor product $\boxtimes$, all state spaces are of the form }
\[\Omega_S =  f \boxtimes\Delta_N\]
\noindent \emph{and the decoherence map is of the form }
\[D_\mathcal{S} = (s\circ u)_f\otimes \mathds 1_{\Delta_N} \]
\noindent\emph{where $u$ is the discarding effect and $s$ is some fixed internal state of $f$. E.g. decoherence of non-classical systems comprises of discarding them.}

Therefore, if we restrict ourselves to considering non-classical theories, the only decoherence-to-classical map possible is the trivial map where we discard our non-classical systems. For example, in quantum theory this would correspond to all quantum systems regardless of their state or dimension decohereing to the zero-dimensional classical state, and the resulting classical theory being trivial.

\section{Discussion}
In this article we have shown that if a theory has a non-trivial decoherence mechanism, such that decoherence isn't simply discarding the system, then the theory must have entangled states. It therefore seems that entanglement, rather than being a surprising feature of nature, is an entirely inevitable feature of any post-classical theory. A natural question to ask is what other features of quantum theory can be reproduced simply by demanding that the theory has a classical limit?

There are myriad other physical features that could be implied from the existence of a classical limit such as information causality \cite{pawlowski2009information}, bit symmetry \cite{muller2012structure} and macroscopic locality \cite{navascues2009glance} to name but a few. Of particular interest would be deriving genuine device-independent non-locality. The existence of entangled states is in general a necessary but insufficient condition for observing violations of Bell inequalities. For example, non-separable states are present in the local theory of Spekken's toy model \cite{spekkens2007evidence}. On the other hand, it has been shown that all entangled states in quantum theory display some hidden non-locality \cite{buscemi2012all,masanes2008all}. By determining the additional structure present in quantum theory that gives this correspondence between entanglement and non-locality, it could be possible to derive the violation of Bell inequalities from purely physical postulates. Given the simplicity of the postulates used to derive the existence of entangled states, it is plausible that the postulates that give rise to Bell non-locality are similarly mundane.

This notion of decoherence has allowed us to define an interesting class of GPTs -- those with a classical limit \footnote{Theories with a classical limit also have been considered in earlier work within the closely related frameworks of Categorical Probabilistic Theories \cite{gogioso2017categorical} and process-theories \cite{selby2017leaks,coecke2017twoTEST}.}. There is clear physical motivation to consider this class. For example, if a theory were not to have such a limit then one would have to posit the existence of two fundamentally distinct types of systems, the classical systems (which are how we interact with the world) along with post-classical systems (which we cannot directly probe). Such a fundamental distinction appears unnatural, and so it seems that decoherence is a necessary feature of any sensible operational theory. However, whilst being a physically well-motivated class, it nonetheless provides a great deal of mathematical structure and as such gives a more powerful framework for studying generalised theories.

\noindent \textbf{Acknowledgments:} The authors would like to thank the reviewers for this paper whose insightful criticism led to the development of the ideas presented in this current version. Additionally, the authors would like to thank Llu\'is Masanes and Ciar\'{a}n Lee for useful discussion and Fabio Costa and Magdalena Zych for pointing out a flaw in the proof in the previous version of this paper. JR and JHS are supported by EPSRC through the Controlled Quantum Dynamics Centre for Doctoral Training.
\bibliographystyle{plain}
\bibliography{bibliography}

\appendix

\widetext

In these appendices we first provide sufficient mathematical background to understand the result and secondly, provide the proof of our main theorem.

\section{Mathematical background}\label{app-mathsbackground}

As mentioned in the main body, we associate each system $\mathcal{S}$ with a state space $\Omega_\mathcal{S}$ which is defined as
a finite dimensional compact closed convex set.  That is to say, a closed and bounded set of vectors in a real, finite-dimensional vector space such that if $a_1$ and $a_2$ are inside the set then for $p\in[0,1]$, $pa_1+(1-p)a_2$ is also in the set.

It will often be useful to work with the convex pointed cone $\mathcal K_\mathcal{S}$ generated by convex set $\Omega_\mathcal{S}$, for which we use the standard definition \cite{barker1981theory}, for a more detailed background on convex sets and cones see e.g. \cite{valentine1964convex}.
\begin{definition}
$\mathcal K_\mathcal{S}$ is the cone generated by a convex set $\Omega_\mathcal{S}\subseteq \mathds{R}^d$, defined as:
 \[\mathcal{K}_\mathcal{S} := \mathsf{Conv}\left[\left\{ \lambda (s,1) \middle| \lambda\in \mathds{R}^+, s\in \Omega_\mathcal{S}\right\}\right]\subseteq \mathds{R}^{d+1}.\]
\end{definition}

An important feature of convex sets is that they have a partial ordering in terms of \emph{refinements} \citep{chiribella2010probabilistic}.
\begin{definition}
We say that a state $s$ refines a state $s'$, denoted $s\succ s'$, if there is some convex decomposition of $s'$ that includes $s$, more formally,
\[s\succ s' \ \ \iff \ \ \exists p\in [0,1], t \text{ such that } s'=ps+(1-p)t.\]
\end{definition}
Operationally this means that $s'$ is `less pure' than $s$ as it can be written as a probabilistic mixture which includes the state $s$. This partial ordering on states allows us to define another important concept, that is, \emph{faces} of the state space.
\begin{definition}
A face $f$ is a subset of the state space which is closed under refinements, that is, if a state refines a state in the face then it must also be in the faces, formally:
\[s' \in f \text{ and } s\succ s' \quad \implies s \in f\]
\end{definition}
The 0 dimensional faces are referred to as \emph{vertexes}, denoted $\mathsf{Vertex}(\mathcal S)$, and correspond to pure states. Faces inherit an partial ordering from the ordering of states, that is, $f\leq g \iff \exists s_f \in f, s_g \in f \text{ s.t. } s_f\succ s_g$. This ordering can be shown to have the structure of a \emph{lattice}. Importantly, this means that the set of faces have two binary operations, $\land$ and $\lor$. $f\land g$ corresponds to the intersection of $f$ and $g$ whilst $f\lor g$ is the minimal face that contains the faces $f$ and $g$. Furthermore, faces are \emph{exposed} iff there is a supporting hyperplane that intersects the convex set at the face only, otherwise the face is not exposed.

\begin{center}
\begin{figure}[h!]
\includegraphics[scale=0.7]{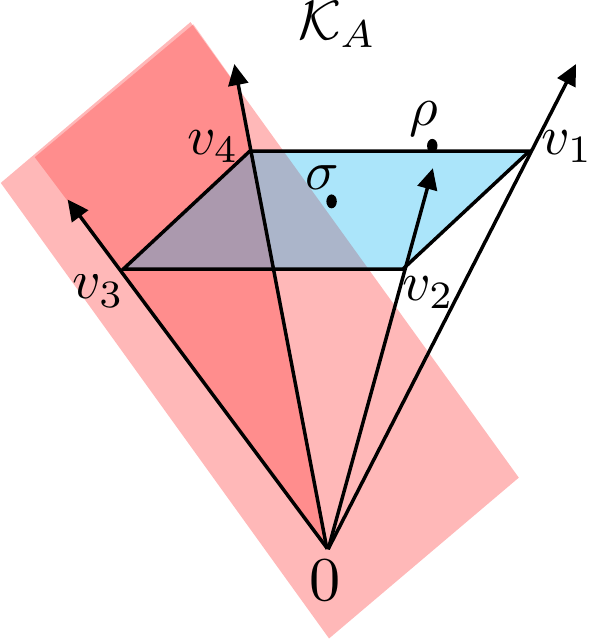}
\caption{Figure shows the cone $\mathcal K_A$ generated by state space $A$ which is a square. The four pure states correspond to the four extremal points $v_i\in\mathsf{Vertex}(A)$ and the arrows correspond to the extremal rays of the cone, which are generated by these points. The point of the cone is shown as $0$. $\vee \rho = \mathsf{conv}(v_1, v_4)$, and $v_1, v_4 \succ \rho$. $\vee \sigma = A$ and $\rho \succ \sigma$. The red plane depicts a supporting hyperplane of the cone, which supports the face of the cone generated by the face of the state space $\mathsf{conv}(v_3, v_4)$ which in turn spans this hyperplane. }\label{fig: cone}
\end{figure}
\end{center}

\begin{center}
\begin{figure}[h!]
\includegraphics[scale=0.5]{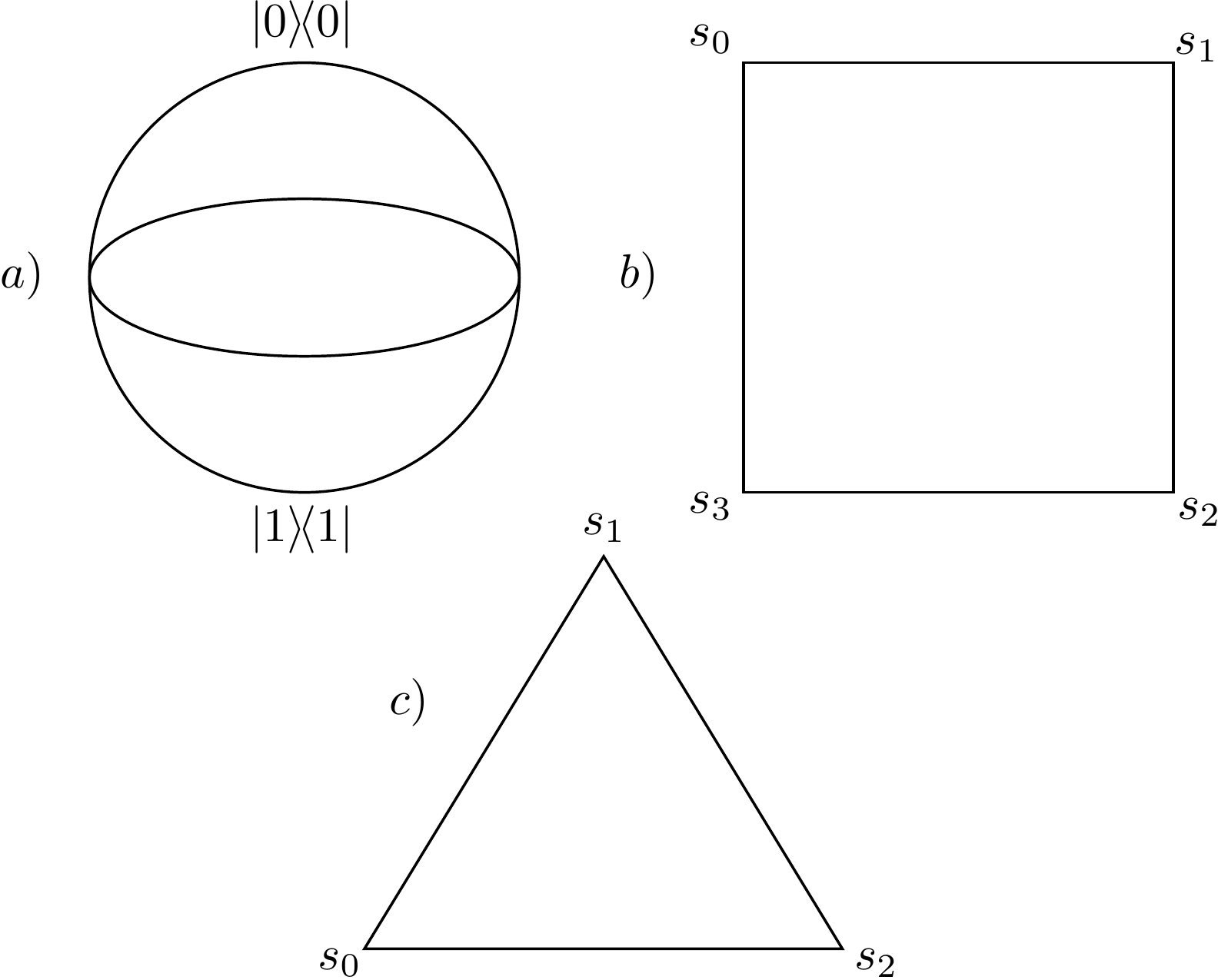}
\caption{Figure shows the local state spaces of various GPTs including $a)$ a qubit, $b)$ a Popescu-Rohrlich box \cite{popescu1994quantum} and $c)$ a classical 3-level system (trit). The latter two have a finite number of pure states which are indicated.  }\label{figure: convex sets}
\end{figure}
\end{center}

 Given two systems $\mathcal A$ and $\mathcal B$ there are two constructions of composite state cones which are important for the derivation of our results.

\begin{definition}Minimal tensor product, $\mathcal{A}\boxtimes \mathcal{B}$:
\[\mathcal{K_A}\boxtimes \mathcal{K_B}:=\mathsf{Conv}\left[\left\{a\otimes b \middle| a\in \mathcal{K_A}, b\in \mathcal{K_B} \right\} \right]\]
where $\otimes$ is the vector space tensor product such that $\mathcal{K_A}\boxtimes \mathcal{K_B}\subseteq \mathsf{Span}[\mathcal{K_A}]\otimes \mathsf{Span}[\mathcal{K_B}]$.
\end{definition}

\begin{definition}Direct sum, $\mathcal{A}\oplus \mathcal{B}$:
\[\mathcal{K_A}\oplus \mathcal{K_B}:=\mathsf{Conv}\left[\left\{a\oplus {\bf 0}, {\bf 0}\oplus b \middle| a\in \mathcal{K_A}, b\in \mathcal{K_B} \right\} \right]\]
where $\oplus$ on the RHS is the vector space direct sum such that $\mathcal{K_A}\oplus \mathcal{K_B}\subseteq \mathsf{Span}[\mathcal{K_A}]\oplus \mathsf{Span}[\mathcal{K_B}]$.
\end{definition}

There is an important relation between these two constructions, namely that $\boxtimes$ distributes over $\oplus$.
\begin{lemma}\label{lem: distribute}
The min-tensor $\boxtimes$ distributes over the direct sum $\oplus$.
\end{lemma}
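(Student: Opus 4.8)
The plan is to read the lemma as the cone identity $\mathcal{K}_\mathcal{A} \boxtimes (\mathcal{K}_\mathcal{B} \oplus \mathcal{K}_\mathcal{C}) = (\mathcal{K}_\mathcal{A} \boxtimes \mathcal{K}_\mathcal{B}) \oplus (\mathcal{K}_\mathcal{A} \boxtimes \mathcal{K}_\mathcal{C})$ and to prove it by double inclusion. The scaffold is the standard vector-space fact that the tensor product distributes over the direct sum, i.e. $\mathsf{Span}[\mathcal{K}_\mathcal{A}] \otimes (\mathsf{Span}[\mathcal{K}_\mathcal{B}] \oplus \mathsf{Span}[\mathcal{K}_\mathcal{C}]) \cong (\mathsf{Span}[\mathcal{K}_\mathcal{A}] \otimes \mathsf{Span}[\mathcal{K}_\mathcal{B}]) \oplus (\mathsf{Span}[\mathcal{K}_\mathcal{A}] \otimes \mathsf{Span}[\mathcal{K}_\mathcal{C}])$, under which a simple tensor behaves as $a \otimes (b \oplus c) = (a \otimes b) \oplus (a \otimes c)$; this supplies the common ambient space in which both sides live, reducing the claim to a convex-geometric comparison. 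Before comparing, I would record two simplifications: since $\mathcal{K}_\mathcal{B}, \mathcal{K}_\mathcal{C}$ are cones containing $\mathbf{0}$, their direct sum is exactly $\{b \oplus c \mid b \in \mathcal{K}_\mathcal{B}, c \in \mathcal{K}_\mathcal{C}\}$, and each min-tensor is itself a convex cone because its generating set is already closed under positive scaling.

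For the inclusion $\subseteq$ I would argue on generators. A generic generator of the left-hand cone is $a \otimes (b \oplus c) = (a \otimes b) \oplus (a \otimes c)$ with $a \in \mathcal{K}_\mathcal{A}$, $b \in \mathcal{K}_\mathcal{B}$, $c \in \mathcal{K}_\mathcal{C}$; since $a \otimes b \in \mathcal{K}_\mathcal{A} \boxtimes \mathcal{K}_\mathcal{B}$ and $a \otimes c \in \mathcal{K}_\mathcal{A} \boxtimes \mathcal{K}_\mathcal{C}$, this generator already lies in the right-hand cone. As the right-hand side is convex and the left-hand side is the convex hull of its generators, the inclusion follows at once.

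The reverse inclusion $\supseteq$ is where the real work sits, and I expect it to be the main obstacle: the generators of the left-hand cone force the \emph{same} $\mathcal{A}$-factor $a$ into both direct-sum components, whereas a generic right-hand element $X \oplus Y$ carries independent $\mathcal{A}$-factors in its two summands. The key idea that resolves this is to exploit the zero element. Taking $c = \mathbf{0}$ in a left-hand generator gives $(a \otimes b) \oplus \mathbf{0}$, and closing under positive combinations shows $X \oplus \mathbf{0}$ lies in the left-hand cone for every $X \in \mathcal{K}_\mathcal{A} \boxtimes \mathcal{K}_\mathcal{B}$; symmetrically $\mathbf{0} \oplus Y$ lies there for every $Y \in \mathcal{K}_\mathcal{A} \boxtimes \mathcal{K}_\mathcal{C}$. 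Since a general right-hand element decomposes as $X \oplus Y = (X \oplus \mathbf{0}) + (\mathbf{0} \oplus Y)$ and the left-hand cone is closed under addition (being a convex cone), the sum lies on the left, completing the argument. I would finish by noting that every step respects the ambient isomorphism, so the cone identity holds as stated.
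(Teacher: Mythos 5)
Your proof is correct and fills in, by double inclusion, exactly the argument the paper sketches: the forward direction is distributivity of $\otimes$ over $\oplus$ on generators, and your reverse direction (setting one factor to $\mathbf{0}$ to obtain $X\oplus\mathbf{0}$ and $\mathbf{0}\oplus Y$, then adding within the convex cone) is precisely the paper's remark that $\mathbf{0}\otimes s = \mathbf{0}\otimes\mathbf{0}$ put to work. Same approach, just spelled out in more detail than the paper's one-line proof.
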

\begin{proof}
Consider three state spaces $A$, $ B$ and $ C$. We want to show that,
\[ A\boxtimes (B\oplus C) = (A\boxtimes B)\oplus(A\boxtimes C)\]
This follows immediately from writing out both sides using the definitions of $\boxtimes$ and $\oplus$, using distributivity of the direct sum and tensor product on the individual vectors, and noting that ${\bf 0} \otimes s = {\bf 0}\otimes {\bf 0}$.
\end{proof}

We can characterise whether a state space is decomposable over the direct sum by observing certain properties of the cone.
\begin{definition}\label{def: decomp}
 A state space $\mathcal{S}$ is decomposable $\mathcal{K_S}=f\oplus g$ if and only if there are a pair of faces $f,g \in \mathsf{Face}[\mathcal{K_S}]$ such that: i) the state space is the convex hull of the two faces, $\mathcal{K_S}=f\lor g = \mathsf{Conv}[\{f,g\}]$; ii) the faces are linearly independent, $\mathsf{Span}[f]\cap \mathsf{Span}[g]=\mathbf{0}$.
\end{definition}

\

\begin{lemma}\label{lem:LatMorph2} $T(f)=f'\cong f$ i.e. faces are mapped to isomorphic faces. \end{lemma}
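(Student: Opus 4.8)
The plan is to use the fact that the transformation in the statement is a reversible transformation $T\in\mathcal{G}_\mathcal{S}$, i.e. a linear automorphism of the cone whose inverse $T^{-1}$ is again a valid reversible transformation and is therefore itself linear and maps the normalised state space onto itself. The whole argument rests on showing that $T$ is an isomorphism of the refinement partial order; the isomorphism of faces then follows almost immediately. Concretely, I would split the proof into three steps: (i) $T$ preserves refinements in both directions, (ii) $T$ therefore sends faces to faces, and (iii) the restriction of $T$ to a face is an affine isomorphism onto its image.

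First I would show that $T$ preserves refinements bidirectionally, i.e. $s\succ s' \iff T(s)\succ T(s')$. For the forward direction, if $s\succ s'$ then by definition $s'=ps+(1-p)t$ for some $p\in[0,1]$ and some state $t$; linearity of $T$ gives $T(s')=p\,T(s)+(1-p)\,T(t)$, and since $T$ preserves $\Omega_\mathcal{S}$ we have $T(t)\in\Omega_\mathcal{S}$, so by definition $T(s)\succ T(s')$. The converse is obtained by running the identical argument with the reversible transformation $T^{-1}$ in place of $T$, which is legitimate precisely because $T^{-1}$ is also a state-preserving linear map.

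Next I would deduce that $T$ maps faces to faces. Let $f$ be a face and suppose $r\succ T(s')$ with $T(s')\in T(f)$. By surjectivity of $T$ I can write $r=T(s)$, and the backward implication just established gives $s\succ s'$; since $f$ is closed under refinement and $s'\in f$, this forces $s\in f$, whence $r=T(s)\in T(f)$. Thus $f':=T(f)$ is closed under refinement, and it is convex because $T$ is linear and $f$ is convex, so $f'$ is a face. Finally, the isomorphism $f\cong f'$ is witnessed by the restriction $T|_f\colon f\to f'$, which is the restriction of a linear bijection and hence an affine bijection of convex sets with inverse $T^{-1}|_{f'}$; by the bidirectional order preservation it also respects the refinement structure on the two faces.

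The only genuinely load-bearing step, and where I would be most careful, is the bidirectional preservation of $\succ$: it hinges entirely on $T^{-1}$ being a bona fide transformation that carries states to states, so that a decomposition of $T(s')$ can be pulled back to a decomposition of $s'$. Once reversibility is used in this way, the remaining claims—that $T(f)$ is closed under refinement and that $T|_f$ is an affine isomorphism—are routine consequences of linearity and invertibility.
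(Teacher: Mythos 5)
Your proof is correct and takes essentially the same route as the paper: the key step in both is to pull a convex decomposition (equivalently, a refinement) back through the linear, state-preserving map $T^{-1}$ to conclude that $T(f)$ is closed under refinement, with the isomorphism then witnessed by the restriction of $T$. Your version merely makes the order-preservation and the affine-isomorphism conclusion slightly more explicit than the paper's.
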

\proof Consider some $a'\in f'$ then $f'$ is a face if for any decomposition, $a'=p a'_1 +(1-p)a'_2$, $a'_i$ are also in the set $f'$. Note that $a'=T(a)$ for some $a\in f$ and as $T$ is reversible, this means that $a=T^{-1}(a')=T^{-1}(p a'_1 +(1-p)a'_2)$ which by linearity of $T^{-1}$ implies that $a=pT^{-1}(a'_1)+(1-p)T^{-1}(a'_2)$. This provides a decomposition of $a$ and as $f$ is a face this means that $T^{-1}(a'_i)\in f$ and so $a'_i$ are both in $f'$. Therefore $f'$ is a face. It is clearly isomorphic to $f$ as $T$ provides the isomorphism. \endproof

\begin{lemma}\label{lem:LatAutoMorph}A reversible transformation $T$ on $A$ induces an automorphism of the face lattice $\mathsf{Face}(A)$.
\end{lemma}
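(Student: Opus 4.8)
The plan is to build directly on Lemma~\ref{lem:LatMorph2}, which already establishes that $T$ sends each face of $A$ to a face. First I would define the induced map $\hat T : \mathsf{Face}(A) \to \mathsf{Face}(A)$ by $\hat T(f) := T(f)$, noting that this is well-defined precisely because Lemma~\ref{lem:LatMorph2} guarantees $T(f)$ is again a face isomorphic to $f$. The first substantive step is bijectivity. Since $T$ is reversible, $T^{-1}$ is itself a valid reversible transformation, so by the same lemma it too induces a map $\widehat{T^{-1}}$ on $\mathsf{Face}(A)$. From $T\circ T^{-1} = \mathds{1} = T^{-1}\circ T$ one reads off, at the level of faces, that $\hat T \circ \widehat{T^{-1}} = \mathrm{id} = \widehat{T^{-1}}\circ \hat T$, so $\hat T$ is a bijection of the face lattice onto itself with inverse $\widehat{T^{-1}}$.

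Next I would show that $\hat T$ respects the ordering on faces, which reduces to set containment $f \le g \iff f \subseteq g$. Because $T$ is a bijective linear map on the ambient vector space, $f\subseteq g \iff T(f)\subseteq T(g)$: the forward implication is immediate, and the reverse follows by applying $T^{-1}$ and invoking that it too maps faces to faces. Hence $\hat T$ is an order isomorphism, preserving $\le$ in both directions. It then remains to confirm that $\hat T$ preserves the two lattice operations. For the meet this is immediate, since $f\wedge g = f\cap g$ and any bijection satisfies $T(f\cap g) = T(f)\cap T(g)$, giving $\hat T(f\wedge g) = \hat T(f)\wedge \hat T(g)$. For the join I would use the characterisation of $f\vee g$ as the minimal face containing both $f$ and $g$: the face $\hat T(f\vee g)$ contains $\hat T(f)$ and $\hat T(g)$, and conversely any face $h$ containing both pulls back under $\widehat{T^{-1}}$ to a face containing $f$ and $g$, hence containing $f\vee g$, so that $h \supseteq \hat T(f\vee g)$. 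Thus $\hat T(f\vee g)$ is the minimal face above $\hat T(f)$ and $\hat T(g)$, i.e.\ $\hat T(f)\vee \hat T(g)$. A bijection preserving $\wedge$ and $\vee$ is by definition a lattice automorphism, completing the argument.

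The main obstacle is not the meet or the order-preservation, which are routine, but ensuring that the ordering is genuinely \emph{reflected} and that the join is preserved; both of these lean crucially on reversibility. Without an inverse transformation that also maps faces to faces, a non-invertible $T$ could collapse distinct faces together and destroy the lattice structure, so the argument must consistently exploit that $T^{-1} \in \mathcal{G}_A$ and invoke Lemma~\ref{lem:LatMorph2} for $T^{-1}$ as well as for $T$.
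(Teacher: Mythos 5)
Your proof is correct and follows essentially the same route as the paper: invoke Lemma~\ref{lem:LatMorph2} to see that $T$ (and, by reversibility, $T^{-1}$) maps faces to faces, deduce bijectivity on $\mathsf{Face}(A)$, and conclude a lattice automorphism. You simply spell out the verification that $\wedge$ and $\vee$ are preserved, which the paper's one-line proof leaves implicit.
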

\proof Lemma \ref{lem:LatMorph2} shows that faces are mapped to faces, then reversibility of $T$ implies that this mapping of faces must be 1 to 1 and hence induces a lattice automorphism.\endproof
An immediate corollary of this is that:
\begin{corollary}\label{lem:LatMorph1}
$T(a_1\lor a_2) =T(a_1)\lor T(a_2)$
\end{corollary}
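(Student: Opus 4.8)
The plan is to deduce the corollary directly from Lemma \ref{lem:LatAutoMorph}, which tells us that $T$ acts as an automorphism of the face lattice $\mathsf{Face}(A)$, together with the standard order-theoretic fact that any lattice automorphism preserves the join $\lor$. Since $a_1 \lor a_2$ is by definition the least upper bound of $a_1$ and $a_2$ in the face ordering, the only ingredients I really need are that both $T$ and its inverse are order-preserving maps of faces to faces; everything else is a routine two-sided bounding argument.

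First I would record that $T$ is order-preserving: if $f \leq g$, meaning there are $s_f \in f$ and $s_g \in g$ with $s_f \succ s_g$, then applying $T$ and using its linearity together with Lemma \ref{lem:LatMorph2} (faces map to faces) gives $T(s_f) \succ T(s_g)$, so that $T(f) \leq T(g)$. The identical argument applied to $T^{-1}$, which is itself a reversible transformation, shows $T^{-1}$ is order-preserving as well. Hence $T$ is an order isomorphism of $\mathsf{Face}(A)$, consistent with Lemma \ref{lem:LatAutoMorph}.

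Next I would run the bounding argument for the join. Because $a_1, a_2 \leq a_1 \lor a_2$ and $T$ is order-preserving, $T(a_1 \lor a_2)$ is an upper bound of both $T(a_1)$ and $T(a_2)$, so by minimality of the join $T(a_1) \lor T(a_2) \leq T(a_1 \lor a_2)$. For the reverse inequality, set $b = T(a_1) \lor T(a_2)$; since $T^{-1}$ is order-preserving, $T^{-1}(b)$ is an upper bound of $a_1$ and $a_2$, whence $a_1 \lor a_2 \leq T^{-1}(b)$, and applying $T$ yields $T(a_1 \lor a_2) \leq b$. Combining the two inequalities gives the claimed equality.

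I do not expect a genuine obstacle, since the statement is an immediate structural consequence of Lemma \ref{lem:LatAutoMorph}; the only points requiring a little care are confirming that $T^{-1}$ inherits order-preservation, so that $T$ is truly a lattice isomorphism and not merely monotone, and, if $a_1$ and $a_2$ are read as states rather than faces, interpreting each $a_i$ as the minimal face it generates so that $\lor$ is well defined throughout.
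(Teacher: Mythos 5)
Your proof is correct and matches the paper's intent: the paper states this as an immediate corollary of Lemma \ref{lem:LatAutoMorph} without writing out a proof, and your two-sided bounding argument (order-preservation of $T$ and $T^{-1}$, then minimality of the join) is exactly the standard justification that makes it immediate.
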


\subsection{Single system consequences of decoherence}
We begin by considering a state space $\Omega_\mathcal{S}$ with a decoherence map $D_\mathcal{S}$ such that $D_\mathcal{S}[\Omega_\mathcal{S}]=\Delta_N$. We label the vertices of this classical state space as $s_i\in \mathsf{Vertex}[\Delta_N]$ for $i=1,...,N$.

\

\begin{lemma}\label{lem:RefinmentSetFaces} Every classical pure state $s_i$ is contained in a face $f_i$ that is the refining set of $s_i$. \end{lemma}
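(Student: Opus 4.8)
The plan is to exhibit $f_i$ explicitly as the \emph{refining set} of $s_i$, namely
\[
f_i := \{\, s \in \Omega_\mathcal{S} \mid s \succ s_i \,\},
\]
and then to verify the two assertions hidden in the statement: that $s_i$ itself lies in this set, and that the set genuinely satisfies the face condition (closure under refinements) from the definition above. So the target reduces to three checks — that $s_i$ is an honest state of $\Omega_\mathcal{S}$, that $s_i \succ s_i$, and that $\succ$ is transitive on $\Omega_\mathcal{S}$.

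First I would confirm that each $s_i$ is a state of $\Omega_\mathcal{S}$ and not merely an abstract element of the image $\Delta_N$. This follows from Physicality in Definition \ref{def dec}: since $D_\mathcal{S}$ maps states to states we have $D_\mathcal{S}[\Omega_\mathcal{S}] \subseteq \Omega_\mathcal{S}$, so each $s_i \in \Delta_N = D_\mathcal{S}[\Omega_\mathcal{S}]$ is a bona fide state (in fact a fixed point, by Idempotence). Containment $s_i \in f_i$ is then immediate from reflexivity of $\succ$: taking $p=1$ in the definition of refinement gives $s_i = 1\cdot s_i + 0\cdot t$, so $s_i \succ s_i$.

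The substantive step is showing $f_i$ is closed under refinements, i.e.\ that $\succ$ is transitive: if $s \succ s'$ and $s' \succ s_i$ then $s \succ s_i$. Unfolding the definitions I would write $s' = p\,s + (1-p)\,t$ and $s_i = q\,s' + (1-q)\,t'$ with $t,t' \in \Omega_\mathcal{S}$, and substitute to get $s_i = qp\,s + \bigl(q(1-p)\,t + (1-q)\,t'\bigr)$. Setting $r = qp$, the remaining coefficients sum to $1-r$, so for $r<1$ they define $w = \tfrac{q(1-p)\,t + (1-q)\,t'}{1-r} \in \Omega_\mathcal{S}$ by convexity, yielding $s_i = r\,s + (1-r)\,w$ and hence $s \succ s_i$; the degenerate case $r=1$ forces $s=s_i$. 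Thus any $s$ refining some $s' \in f_i$ itself refines $s_i$ and lies in $f_i$, so $f_i$ is a face.

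The one thing that needs care is this transitivity argument — specifically, checking that the ``remainder'' $w$ is a legitimate normalised state rather than a bare linear combination, which is exactly where convexity (and closedness) of $\Omega_\mathcal{S}$ is used, together with handling the boundary case $r=1$ separately. Everything else is bookkeeping on the definitions, and I would note in passing that $f_i$ is in fact the \emph{smallest} face containing $s_i$, since any face containing $s_i$ must by closure contain all of its refinements.
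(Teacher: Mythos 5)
Your proof is correct and takes the same route as the paper, which simply asserts that ``it is simple to show that a refining set satisfies this property''; you supply the missing details (reflexivity of $\succ$ for containment, and transitivity of $\succ$ — with the normalised remainder state $w$ and the boundary case $r=1$ handled — for closure under refinements). Nothing further is needed.
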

\proof A face $F$ is defined as a subset of the convex set such that if $s \in F$ and $s= p s_1 + (1-p) s_2$ then $s_i\in F$. It is simple to show that a refining set satisfies this property.
\endproof

\begin{lemma}\label{lem:FaceDecoheres}
Every state in $f_i$ decoheres to $s_i$ and every state that decoheres to $s_i$ is in $f_i$: $D_\mathcal{S}[s]=s_i \iff s_i \in f_i$.
\end{lemma}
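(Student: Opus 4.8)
The plan is to prove the two implications separately, writing $f_i$ for the refining set of $s_i$ (equivalently, by Lemma~\ref{lem:RefinmentSetFaces}, the smallest face containing $s_i$) and $F_i := \{ s \in \Omega_\mathcal{S} \mid D_\mathcal{S}[s] = s_i \}$ for the fibre over $s_i$; the target is exactly $F_i = f_i$.

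First I would dispatch the easy inclusion $f_i \subseteq F_i$. Take $s \succ s_i$, so that $s_i = \alpha s + (1-\alpha) t$ with $\alpha \in (0,1]$ and $t \in \Omega_\mathcal{S}$. Applying the linear map $D_\mathcal{S}$ and using idempotence (which gives $D_\mathcal{S}[s_i] = s_i$, since $s_i$ lies in the image $\Delta_N$) yields $s_i = \alpha D_\mathcal{S}[s] + (1-\alpha) D_\mathcal{S}[t]$ with $D_\mathcal{S}[s], D_\mathcal{S}[t] \in \Delta_N$. Since $s_i$ is a vertex, hence an extreme point, of the simplex $\Delta_N$, this forces $D_\mathcal{S}[s] = s_i$, i.e.\ $s \in F_i$. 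The same extremality argument, applied to an arbitrary refinement of a state already in $F_i$, shows that $F_i$ is closed under refinement; being also convex (it is the intersection of $\Omega_\mathcal{S}$ with the preimage of a point under a linear map), $F_i$ is a face of $\Omega_\mathcal{S}$.

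The substantive direction is $F_i \subseteq f_i$, and here the key step — and the main obstacle — is to show that $s_i$ lies in the \emph{relative interior} of its own fibre $F_i$. Once this is established, the standard fact that a relative-interior point generates the whole face identifies $F_i$ as the smallest face containing $s_i$, i.e.\ $F_i = f_i$, closing both inclusions simultaneously. To prove $s_i \in \mathrm{relint}(F_i)$ I would argue by contradiction, and this is where the purity-decreasing property is genuinely needed (dropping it admits maps whose fibre over a pure classical state is a large non-classical face, falsifying the claim). If $s_i$ lay on the relative boundary of $F_i$, pick $m \in \mathrm{relint}(F_i)$ and set $\tau = \tfrac{1}{2}(s_i + m)$, which again lies in $\mathrm{relint}(F_i)$. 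Then $s_i \succ \tau$ (it appears in this decomposition with weight $\tfrac{1}{2}$), whereas $\tau \not\succ s_i$ (a relative-interior point cannot lie in the proper face generated by the boundary point $s_i$); hence $s_i$ is strictly purer than $\tau$. But $\tau \in F_i$ means $D_\mathcal{S}[\tau] = s_i$, so $D_\mathcal{S}[\tau]$ is strictly purer than $\tau$, contradicting the purity-decreasing property of Definition~\ref{def dec}. Therefore $s_i \in \mathrm{relint}(F_i)$, and the lemma follows.
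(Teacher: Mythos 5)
Your proof is correct and runs on essentially the same engine as the paper's: the forward inclusion uses linearity, idempotence and extremality of the vertex $s_i$ in $\Delta_N$ exactly as the paper does, and the reverse inclusion rests on the identical contradiction --- a mixture $\tau$ of $s_i$ with a point of the fibre satisfies $D_\mathcal{S}[\tau]=s_i\succ\tau$ while $\tau\not\succ s_i$, violating the purity-decreasing property. The only difference is presentational: the paper mixes $s_i$ directly with the offending state $s$ assumed to lie in $F_i\setminus f_i$, whereas you package the same argument as the claim that $s_i$ lies in the relative interior of the fibre $F_i$ and then invoke the standard fact that a face is generated by any of its relative-interior points.
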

\proof
Consider some state $s\in f_i$, this means that $s$ is in the refining set of $s_i$ that is, $s\succ s_i$ and $s_i = \sum_\alpha p_\alpha \sigma_\alpha$ where $\sigma_0 = s$. Then linearity and idempotence of $D_\mathcal{S}$, $D_\mathcal{S}[s_i]=\sum_\alpha p_\alpha D_\mathcal{S}[\sigma_\alpha]$ and therefore $D_\mathcal{S}[s]\succ D_\mathcal{S}[s_i] = s_i$. However $s_i$ is, by definition, a vertex of the image of $D_\mathcal{S}$ which implies that $D_\mathcal{S}[s]=s_i$. Conversely, assume that $D[s]=s_i$, and for the sake of contradiction assume that $s\not\in f_i$ which implies that $s\not\succ s_i = D_\mathcal{S}[s]$. Define a state $\tau := p s_i + (1-p) s$. Clearly $s_i\succ \tau$ but $\tau\not\succ s_i$. However $D_\mathcal{S}[\tau]=s_i$ and therefore $\tau\not\succ D_\mathcal{S}[\tau ]$, $D_\mathcal{S}[\tau ] \succ \tau$. Thus $D_\mathcal{S}[ \tau ]$ is strictly purer that $\tau$ violating the purity-decreasing postulate 3 of Definition 5.
\endproof

\begin{lemma}
The $f_i$ corresponding to distinct pure classical states are disjoint: $f_i\cap f_j = \emptyset \ \ \forall \ i\neq j$.
\end{lemma}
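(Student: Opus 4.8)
The plan is to obtain a contradiction by exploiting the fact that the decoherence map is single-valued, so that a single state cannot decohere to two different classical pure states. First I would suppose, for the sake of contradiction, that there exists a state $s\in f_i\cap f_j$ for some $i\neq j$. The key tool is the second (``conversely'') direction of Lemma \ref{lem:FaceDecoheres}, which establishes the equivalence $D_\mathcal{S}[s]=s_i\iff s\in f_i$: membership of $s$ in the refining face $f_i$ is exactly the condition that $s$ decoheres to $s_i$. Applying this to both faces, $s\in f_i$ yields $D_\mathcal{S}[s]=s_i$ while $s\in f_j$ yields $D_\mathcal{S}[s]=s_j$.

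Next I would invoke the fact that $D_\mathcal{S}$ is a genuine (indeed linear) map, so its value $D_\mathcal{S}[s]$ on the input $s$ is uniquely determined. This forces $s_i=s_j$. But the $s_i$ are the $N$ distinct vertices of the simplex $\Delta_N=D_\mathcal{S}[\Omega_\mathcal{S}]$, so $s_i\neq s_j$ whenever $i\neq j$, giving the contradiction. Hence no state can lie in two distinct faces, and $f_i\cap f_j=\emptyset$ for all $i\neq j$.

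I do not anticipate any real obstacle: the substantive work has already been done in Lemma \ref{lem:FaceDecoheres}, whose converse direction is the nontrivial ingredient (it relies on the purity-decreasing postulate of Definition \ref{def dec} to rule out $s\not\succ s_i$ when $D_\mathcal{S}[s]=s_i$). The only points worth verifying explicitly are that the classical image is indeed the simplex $\Delta_N$ with genuinely distinct extreme points $s_i$, so that $i\neq j$ implies $s_i\neq s_j$, and that the faces $f_i$ are precisely the refining sets of the $s_i$ as fixed by Lemma \ref{lem:RefinmentSetFaces}; both are immediate from the setup.
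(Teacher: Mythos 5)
Your proof is correct and takes essentially the same route as the paper's: both come down to the fact that $f_i$ lies in the preimage $D_\mathcal{S}^{-1}(s_i)$ and that preimages of distinct points under a single-valued map are disjoint. One small mislabel: the direction of Lemma \ref{lem:FaceDecoheres} you actually invoke is the forward implication $s\in f_i\Rightarrow D_\mathcal{S}[s]=s_i$, not the ``conversely'' direction you describe as the key tool.
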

\proof
By Lem.~\ref{lem:FaceDecoheres} we know that $D_\mathcal{S}[f_i]=s_i$ and so $f_i$ is contained within the preimage of $s_i$. Hence, if the $s_i$ and $s_j$ are distinct then their preimages are disjoint and so $f_i\cap f_j =\emptyset$.
\endproof

\begin{lemma}\label{lem:isomorphicFaces}
The $f_i$ are all isomorphic: $f_i\cong f_j \ \ \forall\ i,j$.
\end{lemma}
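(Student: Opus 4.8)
The plan is to realise the isomorphism $f_i\cong f_j$ as the restriction of a single global reversible transformation on $\mathcal S$ that implements the classical permutation exchanging $s_i$ and $s_j$. The crucial input is property 3 of Definition \ref{def dec clas} together with Definition \ref{def classical}: the reversible transformations of $\Delta_N$ are exactly the permutations of its vertices, so the transposition $\pi_{ij}$ swapping $s_i$ and $s_j$ is one such classical reversible transformation. Hence there exists a reversible $T\in\mathcal G_\mathcal{S}$ with $\pi_{ij}=T\circ D_\mathcal{S}$. The rest of the argument is then a matter of transporting faces along $T$ using the lattice results already established.

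First I would pin down the action of $T$ on the classical vertices. Since $s_i$ lies in the image of $D_\mathcal{S}$, idempotence (postulate 2 of Definition \ref{def dec}) gives $D_\mathcal{S}[s_i]=s_i$, and therefore evaluating the relation $\pi_{ij}=T\circ D_\mathcal{S}$ at $s_i$ yields $T(s_i)=T(D_\mathcal{S}[s_i])=\pi_{ij}(s_i)=s_j$. Next I would show that $T$ carries the whole face $f_i$ onto $f_j$. By Lemma \ref{lem:RefinmentSetFaces}, $f_i$ is precisely the refining set $\{s:s\succ s_i\}$. Because $T$ is linear and invertible it preserves refinement in both directions: from any decomposition $s_i=ps+(1-p)t$ one obtains $T(s_i)=p\,T(s)+(1-p)\,T(t)$, and conversely by applying $T^{-1}$, so $s\succ s_i\iff T(s)\succ T(s_i)=s_j$. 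Consequently $T(f_i)=\{s':s'\succ s_j\}=f_j$.

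Finally, Lemma \ref{lem:LatMorph2} guarantees that a reversible transformation maps a face to an isomorphic face, with $T$ itself supplying the isomorphism; combining this with $T(f_i)=f_j$ gives $f_j\cong f_i$. As $i$ and $j$ were arbitrary, this establishes $f_i\cong f_j$ for all $i,j$.

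I expect the main obstacle to be the correct interpretation and deployment of property 3 of Definition \ref{def dec clas}, specifically extracting a \emph{reversible} $T$ whose action on the embedded classical states is genuinely the permutation $\pi_{ij}$ rather than merely agreeing with it up to a further decoherence. The identification $T(s_i)=s_j$ hinges on $s_i$ being a fixed point of $D_\mathcal{S}$, which is exactly what idempotence provides, so once that is spelled out the remaining steps reduce to the refinement-preservation property of linear automorphisms and a direct appeal to Lemma \ref{lem:LatMorph2}.
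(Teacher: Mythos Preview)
Your argument is correct and matches the paper's approach: both obtain a reversible $T$ lifting a classical permutation sending $s_i$ to $s_j$, then use linearity and invertibility of $T$ to carry the refining set $f_i$ onto $f_j$, with $T$ itself furnishing the isomorphism. Your write-up is in fact more explicit than the paper's on the point you flagged as the main obstacle---the paper simply asserts $T_\pi:: s_i\mapsto s_j$ without spelling out the idempotence step---but the underlying strategy is identical.
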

\proof
In classical theory there is a reversible transformation $\pi:: s_i \mapsto s_j$, hence, by Postulate 3 of Definition 6 there is a reversible transformation in the post-classical theory $T_{\pi}::s_i\mapsto s_j$. As $T_\pi$ is linear it must map the refining set of $s_i$ into the refining set of $s_j$ and, as $T_{\pi}$ is reversible this defines an isomorphism between $f_i$ and $f_j$.
\endproof

\begin{lemma}\label{lem:ExposedFaces1}
The $f_i$ are exposed faces: there is a linear functional $E_i$ such that $E_i(s)\geq 0\ \forall s\in\mathcal{S}$ and $E_i(s)=0 \iff s\in f_i$.
\end{lemma}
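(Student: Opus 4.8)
The plan is to construct the exposing functional $E_i$ by pulling a suitable classical effect back through the decoherence map, exploiting that $D_\mathcal{S}$ is linear and has image inside the simplex $\Delta_N$. First I would use that $D_\mathcal{S}[\Omega_\mathcal{S}]=\Delta_N$ is an $N$-vertex simplex, so its vertices $s_1,\dots,s_N$ generate linearly independent extremal rays of the simplicial cone $\mathcal K_{\Delta_N}$. Consequently there is a well-defined linear functional $g_i$ on $\mathsf{Span}[\mathcal K_{\Delta_N}]$ fixed by $g_i(s_j)=1-\delta_{ij}$; this is in fact a valid classical effect (the one reporting ``not outcome $i$''), it is nonnegative on $\Delta_N$, and on a decohered state $\sum_j q_j s_j$ it takes the value $1-q_i$.

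Next I would define $E_i := g_i\circ D_\mathcal{S}$, which is a linear functional on $\mathsf{Span}[\Omega_\mathcal{S}]$ as the composite of the linear map $D_\mathcal{S}$ with the linear functional $g_i$. Nonnegativity is then immediate: for any $s\in\Omega_\mathcal{S}$ we have $D_\mathcal{S}[s]\in\Delta_N$, so $E_i(s)=g_i(D_\mathcal{S}[s])\geq 0$. This already supplies the supporting (i.e.\ nowhere-negative) functional required.

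For the vanishing locus I would write $D_\mathcal{S}[s]=\sum_j q_j s_j$, giving $E_i(s)=1-q_i$, so that $E_i(s)=0$ iff $q_i=1$, i.e.\ iff $D_\mathcal{S}[s]=s_i$. By Lemma~\ref{lem:FaceDecoheres} this occurs precisely when $s\in f_i$. Hence $\{E_i=0\}$ is a supporting hyperplane meeting $\Omega_\mathcal{S}$ in exactly $f_i$, which is the definition of $f_i$ being exposed, completing the argument.

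The only point needing care --- and the step I expect to be the main obstacle --- is justifying that $g_i$ is a genuine linear functional on the ambient space rather than merely a set-function on $\Delta_N$; this rests on the linear independence of the simplicial cone's generators and on $D_\mathcal{S}$ mapping into $\mathsf{Span}[\Delta_N]$. Everything else is routine. In particular I would not need the effect-reconstruction clause (property~2 of Definition~\ref{def dec clas}), since the lemma asks only for a linear functional and not for $E_i$ to be a physically allowed effect; were an allowed effect demanded, I would instead invoke property~2 to realise $g_i=E_i\circ D_\mathcal{S}$ for some $E_i\in\mathcal E_\mathcal{S}$ and deduce nonnegativity from $E_i$ being an effect.
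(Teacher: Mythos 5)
Your proof is correct and takes essentially the same route as the paper: both construct the exposing functional by pulling the classical effect that vanishes exactly at $s_i$ back through $D_\mathcal{S}$, and both use Lemma~\ref{lem:FaceDecoheres} to identify the zero set with $f_i$. The only (minor) difference is that the paper invokes property~2 of Definition~\ref{def dec clas} to realise this functional as $\epsilon_i\circ D_\mathcal{S}$ for a physical effect $\epsilon_i$ and then applies idempotence, whereas you correctly note that the bare composite $g_i\circ D_\mathcal{S}$ already suffices since the lemma only demands a linear functional.
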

\proof
In classical theory we have effects (i.e. linear functionals that are non-negative on the cone) $E_i$ such that $E_i(s)\geq 0 \ \ \forall s\in \Delta_N$ and $E_i(s)=0 \iff s=s_i$. By Postulate 2 of Definition 6 these arise from the post-classical theory as $E_i=\epsilon_i \circ D_\mathcal{S}$ and so by idempotence of decoherence we have $E_i\circ D_\mathcal{S}=E_i$. Therefore, $E_i(s)=E_i(D_\mathcal{S}[s]) \geq 0 \ \ \forall s \in \mathcal{S}$ and $E_i(s)=E_i(D_\mathcal{S}[s])= 0 \iff D_\mathcal{S}[s]=s_i \iff s\in f_i$.
\endproof

{\begin{lemma}\label{lem:ExposedFaces}
The faces $f_{I}:=\bigvee_{i\in I} f_i$ are all exposed faces.
\end{lemma}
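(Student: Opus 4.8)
The plan is to exhibit, for each $I$, a single non-negative linear functional whose zero set on $\Omega_{\mathcal S}$ is exactly $f_I$, since a face is exposed precisely when it arises as the zero set of such a supporting functional. The natural candidate comes from classical theory: the subsimplex $\Delta_I := \mathsf{Conv}[\{s_i\mid i\in I\}]$ is exposed inside $\Delta_N$ by the classical effect $\sum_{j\notin I} e_j$, where $e_j$ is the coordinate effect with $e_j(s_k)=\delta_{jk}$. Mimicking the construction in Lemma~\ref{lem:ExposedFaces1}, I would pull this back along $D_{\mathcal S}$ to form $E_I := \big(\sum_{j\notin I} e_j\big)\circ D_{\mathcal S}$. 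As $D_{\mathcal S}$ is linear and the classical effect is non-negative on the cone, $E_I$ is a non-negative linear functional on $\mathcal K_{\mathcal S}$ with $E_I(s)=0$ iff $D_{\mathcal S}[s]\in\Delta_I$. Writing $g_I := \{s\in\Omega_{\mathcal S}\mid D_{\mathcal S}[s]\in\Delta_I\}$ for this zero set, $g_I$ is therefore an exposed face, and it remains to prove the geometric identity $f_I = g_I$, after which the lemma is immediate.

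The inclusion $f_I\subseteq g_I$ is the easy half: each $f_i$ with $i\in I$ satisfies $D_{\mathcal S}[f_i]=s_i\in\Delta_I$ by Lemma~\ref{lem:FaceDecoheres}, so $f_i\subseteq g_I$, and since $g_I$ is a face while $f_I=\bigvee_{i\in I}f_i$ is the smallest face containing these $f_i$, we get $f_I\subseteq g_I$. The substantive half is $g_I\subseteq f_I$, which I would prove by adapting the purity-decreasing argument of Lemma~\ref{lem:FaceDecoheres}. Fix $s\in g_I$ and set $y:=D_{\mathcal S}[s]=\sum_{i\in I'} q_i s_i$ with all $q_i>0$, where $I'\subseteq I$ is its support. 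First I would record that the smallest face containing $y$ is exactly $f_{I'}=\bigvee_{i\in I'} f_i$: each $s_i$ with $i\in I'$ refines $y$ and so lies in the refining set of $y$ (which is the smallest face containing $y$ by the reasoning of Lemma~\ref{lem:RefinmentSetFaces}), while conversely $y\in\Delta_{I'}\subseteq f_{I'}$, pinning the generated face down on both sides. Thus it suffices to show $s\succ y$. Suppose not; then $\tau := p\,y+(1-p)\,s$ with $p\in(0,1)$ satisfies $y\succ\tau$, while $\tau\not\succ y$, since $\tau\succ y$ would, after rearranging the convex combination, exhibit $s$ with positive weight in a decomposition of $y$, i.e. $s\succ y$. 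By linearity and idempotence $D_{\mathcal S}[\tau]=p\,y+(1-p)\,y=y$, so $D_{\mathcal S}[\tau]\succ\tau$ yet $\tau\not\succ D_{\mathcal S}[\tau]$, making $D_{\mathcal S}[\tau]$ strictly purer than $\tau$ and contradicting the purity-decreasing postulate of Definition~\ref{def dec}. Hence $s\succ y$, so $s\in f_{I'}\subseteq f_I$, giving $g_I\subseteq f_I$ and therefore $f_I=g_I$.

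The main obstacle is precisely this second inclusion: a priori a state may decohere into $\Delta_I$ without obviously being assembled from the faces $\{f_i\}_{i\in I}$, and at this stage we do not yet know that $\Omega_{\mathcal S}$ is the convex hull of its classical faces (that is Result~2, obtained later from composites). The purity-decreasing contradiction is what bridges this gap without presupposing that structure, converting the single-system postulates of Definition~\ref{def dec} into the required refinement $s\succ D_{\mathcal S}[s]$. The only points needing care are that $y$ genuinely lies in the relative interior of $\Delta_{I'}$, so that the face it generates is all of $f_{I'}$, and that the degenerate cases $I=\emptyset$ and $|I|=N$ are handled, both of which are straightforward.
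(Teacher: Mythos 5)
Your proof is correct and follows essentially the same route as the paper's: pull the classical effect vanishing exactly on $\mathsf{Conv}[\{s_i\}_{i\in I}]$ back through $D_{\mathcal S}$, identify $f_I$ with a refinement set, and use the purity-decreasing contradiction with $\tau = p\,D_{\mathcal S}[s]+(1-p)s$ to show that every state decohering into $\Delta_I$ lies in $f_I$. The only differences are organizational (you phrase the key step as $s\succ D_{\mathcal S}[s]$ and work with the support $I'$ of $D_{\mathcal S}[s]$ rather than the uniform mixture $\frac{1}{|I|}\sum_{i\in I}s_i$), not substantive.
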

\proof
First note that Lem.~\ref{lem:RefinmentSetFaces} implies that the refinement set of any state is a face. Moreover, note that if a state belongs to a face, then the refinement set of that state is a subset of that face. Also note that we always have $\mathsf{Conv}[\{f_i\}_{i\in I}]\subseteq\bigvee_{i\in I} f_i$. Which, as $s_i\in f_i$, implies that $\frac{1}{|I|}\sum_{i\in I} s_i \in \bigvee_{i\in I}f_i$. Hence the refinement set of $\frac{1}{|I|}\sum_{i\in I} s_i$, denoted $\mathsf{Ref}\left[\frac{1}{|I|}\sum_{i\in I} s_i\right]$ satisfies $\mathsf{Ref}\left[\frac{1}{|I|}\sum_{i\in I} s_i\right]\subseteq \bigvee_{i\in I}f_i$. However, it is also the case that for all $i\in I$ we have $f_i\subseteq \mathsf{Ref}\left[\frac{1}{|I|}\sum_{i\in I} s_i\right]$ hence -- by minimality of $\lor$ and the fact that the refinement set is a face -- we must have $\bigvee_{i\in I} f_i \subseteq \mathsf{Ref}\left[\frac{1}{|I|}\sum_{i\in I} s_i\right]$. Combining these gives us that \[\mathsf{Ref}\left[\frac{1}{|I|}\sum_{i\in I} s_i\right] = \bigvee_{i\in I} f_i := f_I\]

Now we can consider a generalisation of Lem.~\ref{lem:FaceDecoheres}. It is simple to see that $D[f_I] = \textsf{Conv}[\{s_i\}_{i\in I}]$, from idempotence of $D$ and the fact that $s_j\in f_I \iff j \in I$. We then want to show that
\[D[s] \in \textsf{Conv}[\{s_i\}_{i\in I}] \implies s\in f_I\]
Assume the converse for the sake of contradiction, that is, there is a state $\sigma\not \in f_I$ such that $D[\sigma]\in\textsf{Conv}[\{s_i\}_{i\in I}]$. We can then define a state $\tau := p D[\sigma] +(1-p) \sigma$ such that $D[\sigma]\succ\tau$ but as $D[\sigma]\in f_I$ and $\sigma\not\in f_I$ it is clear that $\tau\not\succ D[\sigma]$. Now note that $D[\tau] = D[\sigma]$ so combining these we find that $D[\tau]\succ \tau$ but $\tau\not\succ D[\tau]$ which contradicts the purity-decreasing postulate 3 of Definition 5. Therefore it must be that $\sigma \in f_I$.

Finally to complete this proof we generalise Lem.~\ref{lem:ExposedFaces1}. We can consider the classical effect $E_I$ defined such that $E_I(s_j) = 0 \iff j\in I$. And note that this must arise from an effect $\epsilon_I$ in the full theory as $E_I:=\epsilon_I\circ D$. It is then simple to see, from idempotence of $D$, that $E_I(s)=0 \implies E_I(D[s])=0$ which means that $D[s]\in \textsf{Conv}[\{s_i\}_{i\in I}]$ and so, by the above result, $s\in f_I$. This concludes the proof as $E_I$ is an effect that exposes the face $f_I$.
\endproof
}

\begin{lemma}
The minimal face that contains $\Delta_N$ is the entire state space: $\bigvee_{i=1}^N f_i = \mathcal{S}$.
\end{lemma}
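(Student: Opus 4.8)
The plan is to prove the two inclusions separately, with only one of them requiring any work. The inclusion $\bigvee_{i=1}^N f_i \subseteq \mathcal S$ is immediate: each $f_i$ is by construction a face of the state space $\Omega_\mathcal S$, and the join $\bigvee_i f_i$ is, by definition of $\lor$ in the face lattice, the minimal \emph{face} of $\Omega_\mathcal S$ containing all the $f_i$, hence a subset of $\mathcal S$. So the whole content lies in establishing the reverse inclusion $\mathcal S \subseteq \bigvee_{i=1}^N f_i$, i.e.\ that every state of the system lies in the join of the classical faces.

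The key step is to reuse the machinery already developed in Lem.~\ref{lem:ExposedFaces} with the choice $I = \{1,\dots,N\}$, the full index set. That lemma establishes the identity $f_I = \bigvee_{i\in I} f_i$ together with the crucial implication
\[D_\mathcal{S}[s] \in \mathsf{Conv}[\{s_i\}_{i\in I}] \implies s\in f_I.\]
For the full index set the convex hull of the vertices is the entire classical state space, $\mathsf{Conv}[\{s_i\}_{i=1}^N] = \Delta_N$, since the $s_i$ are precisely the vertices of $\Delta_N$. Now take an arbitrary state $s\in\Omega_\mathcal S$. Because $D_\mathcal{S}[\Omega_\mathcal S] = \Delta_N$ by hypothesis, we have $D_\mathcal{S}[s]\in\Delta_N = \mathsf{Conv}[\{s_i\}_{i=1}^N]$, and the implication above immediately yields $s\in f_{\{1,\dots,N\}} = \bigvee_{i=1}^N f_i$. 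Since $s$ was arbitrary, $\mathcal S \subseteq \bigvee_{i=1}^N f_i$, and combining with the trivial inclusion gives $\bigvee_{i=1}^N f_i = \mathcal S$.

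I do not expect a genuine obstacle here, as the heavy lifting, namely the purity-decreasing argument that forces any state decohering into $\mathsf{Conv}[\{s_i\}_{i\in I}]$ to lie in $f_I$, has already been carried out in the proof of Lem.~\ref{lem:ExposedFaces}. The only point requiring care is the identification $\mathsf{Conv}[\{s_i\}_{i=1}^N] = \Delta_N$, which is what makes the hypothesis $D_\mathcal{S}[\Omega_\mathcal S]=\Delta_N$ applicable to \emph{every} state rather than a proper subset. Should one prefer a self-contained argument, the same conclusion can be reached directly: for any $s$ one defines $\tau := p\,D_\mathcal{S}[s] + (1-p)s$ and observes that $D_\mathcal{S}[s]\in\Delta_N\subseteq\bigvee_i f_i$ while $D_\mathcal{S}[\tau]=D_\mathcal{S}[s]$, so that assuming $s\notin\bigvee_i f_i$ would make $D_\mathcal{S}[\tau]$ strictly purer than $\tau$, contradicting the purity-decreasing postulate~3 of Definition~\ref{def dec}; but invoking Lem.~\ref{lem:ExposedFaces} is the more economical route.
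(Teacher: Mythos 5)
Your proof is correct, but it takes a different route from the paper's. The paper argues via interior points: it picks $\rho\in\mathsf{Int}[\mathcal{S}]$, notes that every state refines an interior state, uses the purity-decreasing postulate to conclude that $D_\mathcal{S}[\rho]$ must itself be interior, and then observes that since $D_\mathcal{S}[\rho]\in\Delta_N$ the minimal face containing $\Delta_N$ contains an interior point and hence is all of $\mathcal{S}$. You instead specialise the machinery of Lem.~\ref{lem:ExposedFaces} to the full index set $I=\{1,\dots,N\}$: since $D_\mathcal{S}[s]\in\Delta_N=\mathsf{Conv}[\{s_i\}_{i=1}^N]$ for every $s$, the implication $D_\mathcal{S}[s]\in\mathsf{Conv}[\{s_i\}_{i\in I}]\implies s\in f_I$ immediately yields $\mathcal{S}\subseteq\bigvee_i f_i$. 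This is legitimate --- nothing in the proof of Lem.~\ref{lem:ExposedFaces} restricts $I$ to be a proper subset, and that lemma precedes this one so there is no circularity. Both arguments ultimately rest on the purity-decreasing postulate, but through different constructions: the paper through the topological facts that interior states are refined by all states and that the only face meeting the interior is the whole set; you through the mixture $\tau=p\,D_\mathcal{S}[s]+(1-p)s$ already exploited in Lem.~\ref{lem:ExposedFaces} (your self-contained fallback makes this explicit). Your route is more economical and avoids the mild topological bookkeeping about relative interiors; the paper's version is independent of the somewhat heavier exposed-faces lemma and gives a slightly more direct geometric picture of why $\Delta_N$ cannot sit on a proper face.
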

\proof
Consider some state in the interior of the state space i.e. $\rho \in \mathsf{Int}[\mathcal{S}]$. This means that every other state refines $\rho$, i.e. $s\succ \rho \ \ \forall \ s \in \mathcal{S}$. Now consider $D_\mathcal{S}[\rho]$, and note that if a state is refined by an interior state then it too is an interior state. Therefore, for $D_\mathcal{S}[\rho]$ to not be an interior state it must be that $\rho \not\succ D_\mathcal{S}[\rho]$ but $D_\mathcal{S}[\rho]\succ \rho$, this therefore violates the purity-decreasing Postulate 3 of Definition 5. Therefore, $D_\mathcal{S}[\rho]$ must also be an interior state. As $D_\mathcal{S}[\rho] \in \Delta_N$ is interior to $\mathcal{S}$ then the minimal face that contains $\Delta_N$ must be the entire state space.
\endproof

\begin{result}
To summarise we have, from considering decoherence for single systems found the following properties of a state space $\mathcal{S}$ that decoheres via $D_\mathcal{S}$ to $\Delta_N$.
\begin{enumerate}
\item Every classical pure state $s_i$ is contained in a face that is its refining set $f_i$.
\item These faces decohere to their associated pure classical state and any state that decoheres to the classical state is in the face: $D_\mathcal{S}[s]=s_i \iff s\in f_i$.
\item The faces corresponding to distinct classical pure states are disjoint: $f_i\cap f_j=\emptyset\ \ \forall i\neq j$.
\item These faces are all isomorphic: $f_i\cong f_j \ \ \forall i,j$.
\item These faces are `exposed': there exists an effect $E_i$ such that $E_i(s)=0 \iff s\in f_i$.
\item The minimal face containing all of these $f_i$ is the whole state space: $\bigvee_i f_i = \mathcal{S}$
\end{enumerate}
\end{result}

\subsection{Bipartite system consequences of decoherence}

Now let us turn our attention to bipartite systems, in particular, let us consider the bipartite state space $\mathcal{S}\boxtimes\mathcal{S}$ which decoheres via $D_\mathcal{S}\boxtimes D_\mathcal{S}$ to $\Delta_N\boxtimes\Delta_N=\Delta_{N^2}$. For example, one can consider these two state spaces decohereing independently and then being brought together.

\begin{lemma}\label{lem:ProductRefiningSets}
The refining set for $s_i\otimes s_j$ is $f_i \otimes f_j$.
\end{lemma}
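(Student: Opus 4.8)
The plan is to reduce this statement to the single-system characterisation already established, now applied to the composite. First I would observe that the composite decoherence map $D := D_\mathcal{S}\boxtimes D_\mathcal{S}$ is itself a valid decoherence map taking $\mathcal{S}\boxtimes\mathcal{S}$ to $\Delta_N\boxtimes\Delta_N = \Delta_{N^2}$: it is linear, idempotent, and purity-decreasing, as guaranteed by Property 4 of Definition \ref{def dec clas}. Consequently every lemma proved for single systems applies verbatim to $D$. In particular, since $s_i\otimes s_j$ is a vertex of $\Delta_{N^2}$, Lemma \ref{lem:FaceDecoheres} tells us that its refining set is exactly the preimage $\{\psi\in\Omega_{\mathcal{S}\boxtimes\mathcal{S}} : D[\psi] = s_i\otimes s_j\}$. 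So it suffices to show that this preimage coincides with $f_i\otimes f_j$.

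For the inclusion $f_i\otimes f_j\subseteq \mathsf{Ref}[s_i\otimes s_j]$, I would take a generic element $\psi = \sum_k p_k\, a_k\otimes b_k$ with $a_k\in f_i$, $b_k\in f_j$, and compute $D[\psi] = \sum_k p_k\, D_\mathcal{S}[a_k]\otimes D_\mathcal{S}[b_k]$ by linearity. By Lemma \ref{lem:FaceDecoheres} applied to each factor, $D_\mathcal{S}[a_k] = s_i$ and $D_\mathcal{S}[b_k] = s_j$, so $D[\psi] = s_i\otimes s_j$ and $\psi$ lies in the refining set.

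The converse inclusion is the crux. Given a normalised $\psi$ with $D[\psi] = s_i\otimes s_j$, I would use the defining feature of the min-tensor product to write $\psi = \sum_k p_k\, a_k\otimes b_k$ as a convex combination of normalised product states ($a_k, b_k\in\Omega_\mathcal{S}$). Applying $D$ gives $s_i\otimes s_j = \sum_k p_k\, D_\mathcal{S}[a_k]\otimes D_\mathcal{S}[b_k]$, a convex combination (each $D_\mathcal{S}[a_k], D_\mathcal{S}[b_k]$ being a normalised classical state) of the \emph{vertex} $s_i\otimes s_j$ of $\Delta_{N^2}$. Since a vertex admits no non-trivial convex decomposition, every term with $p_k>0$ must satisfy $D_\mathcal{S}[a_k]\otimes D_\mathcal{S}[b_k] = s_i\otimes s_j$; and because the product of normalised classical states injects into $\Delta_{N^2}$ (distinct simplex vertices tensor to distinct vertices), this forces $D_\mathcal{S}[a_k] = s_i$ and $D_\mathcal{S}[b_k] = s_j$. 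By Lemma \ref{lem:FaceDecoheres} this means $a_k\in f_i$ and $b_k\in f_j$, whence $\psi\in f_i\otimes f_j$.

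The main obstacle I anticipate is this final inclusion, specifically justifying two points cleanly: that the min-tensor decomposition may be taken over \emph{normalised} product states so that the decohered terms genuinely form a convex combination, and that the equality of classical product states $D_\mathcal{S}[a_k]\otimes D_\mathcal{S}[b_k] = s_i\otimes s_j$ separates into equality of the marginals. Both follow from the simplex structure of $\Delta_N$ together with the extremality of $s_i\otimes s_j$, but they are precisely where the argument genuinely uses that the composite is classical under the min-tensor product, rather than merely re-running the single-system lemmas.
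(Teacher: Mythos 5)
Your proof is correct, but it takes a genuinely different route from the paper's. You reduce everything to the single-system machinery applied to the composite: you invoke Lemma~\ref{lem:FaceDecoheres} for $D_\mathcal{S}\boxtimes D_\mathcal{S}$ to identify the refining set of $s_i\otimes s_j$ with the preimage of that vertex, and then pin the preimage down via extremality of a simplex vertex together with the fact that a product of normalised classical states determines its factors. The paper's proof never touches the composite decoherence map: it takes an arbitrary product-state decomposition $s_i\otimes s_j=\sum_\alpha p_\alpha\, a_\alpha\otimes b_\alpha$ (available by definition of $\boxtimes$), applies the marginalisation effects $u\otimes\mathds{1}$ and $\mathds{1}\otimes u$ to obtain $s_i=\sum_\alpha p_\alpha a_\alpha$ and $s_j=\sum_\alpha p_\alpha b_\alpha$, and concludes $a_\alpha\in f_i$, $b_\alpha\in f_j$ directly from the definition of $f_i$ as the refining set of $s_i$. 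The paper's argument is the more elementary one: it uses only the convex structure of the min-tensor product and does not need the nontrivial input that $D_\mathcal{S}\boxtimes D_\mathcal{S}$ is itself a valid (in particular, purity-decreasing) decoherence map, which is what licenses re-running Lemma~\ref{lem:FaceDecoheres} on the composite. That input is indeed granted by Property 4 of Definition~\ref{def dec clas}, so your use of it is legitimate, just stronger than necessary. In exchange, your version is more explicit about two points the paper glosses over: that a general, non-product refining state must first be decomposed into \emph{normalised} product states before any termwise argument applies, and that the easy inclusion $f_i\otimes f_j\subseteq\mathsf{Ref}\left[s_i\otimes s_j\right]$ needs checking at all.
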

\proof
In the minimal tensor product a generic state can be written as $\sum_\alpha p_\alpha a_\alpha \otimes b_\alpha$. Therefore, if $s_i\otimes s_j = \sum_\alpha p_\alpha a_\alpha \otimes b_\alpha$ then by considering the two marginals by applying
 $u \otimes \mathds 1$ and $\mathds 1 \otimes u$ (i.e discarding each of the sub-systems in turn) we get that $s_i = \sum_\alpha p_\alpha a_\alpha$ and $s_j = \sum_\alpha p_\alpha b_\alpha$, and so $a_\alpha \in f_i$ and $b_\alpha \in f_j$ hence the refining set for $s_i\otimes s_j$ is just $f_i\otimes f_j$.
\endproof

\begin{lemma}\label{lem:BipartiteTransformation} There exists a reversible transformation $T$ such that \[T::\begin{cases}f_k\otimes f_i \mapsto f_k\otimes f_i \quad \forall \, i\neq j \\ f_k\otimes f_j \mapsto f_j\otimes f_j\end{cases}\]
\end{lemma}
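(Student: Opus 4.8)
The plan is to realise $T$ as the post-classical lift of a suitable permutation of the $N^2$ classical vertices of the composite. By the Composites postulate (item 4 of Definition \ref{def dec clas}) together with $\Delta_N\boxtimes\Delta_N=\Delta_{N^2}$, the system $\mathcal S\boxtimes\mathcal S$ decoheres via $D_{\mathcal S}\boxtimes D_{\mathcal S}$ to $\Delta_{N^2}$, whose pure states are the products $s_a\otimes s_b$ with $a,b\in\{1,\dots,N\}$, and by Lemma \ref{lem:ProductRefiningSets} the product face $f_a\otimes f_b$ is precisely the refining set $\mathsf{Ref}[s_a\otimes s_b]$. It therefore suffices to (i) exhibit a permutation $\pi$ of the index pairs with the prescribed action, (ii) lift it to a reversible transformation of $\mathcal S\boxtimes\mathcal S$, and (iii) show that this lift sends each product face $f_a\otimes f_b$ to the face $f_{\pi(a,b)}$.

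For (i), I would set $\pi(k,i)=(k,i)$ for every $i\neq j$ and $\pi(k,j)=(j,j)$, and extend $\pi$ arbitrarily on the remaining pairs. One checks that this partial assignment is a consistent injection: its prescribed images $\{(k,i):i\neq j\}\cup\{(j,j)\}$ are pairwise distinct, since $(j,j)$ coincides with no $(k,i)$ for $i\neq j$, and this image set has the same cardinality $N$ as the prescribed domain $\{(k,i)\}_{i=1}^{N}$. As the two complements then both have size $N^2-N$, any bijection between them extends the partial map to a genuine permutation of the $N^2$ vertices, which by Definition \ref{def classical} is a valid classical reversible transformation $T_\pi$ of $\Delta_{N^2}$.

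For (ii) and (iii), the composite decoheres to $\Delta_{N^2}$, and by assumption this classical system enjoys the full reversible dynamics of Definition \ref{def classical}; hence the transformation postulate (item 3 of Definition \ref{def dec clas}) provides a reversible $T$ on $\mathcal S\boxtimes\mathcal S$ satisfying $T_\pi = T\circ(D_{\mathcal S}\boxtimes D_{\mathcal S})$. Since decoherence fixes the already-classical vertices by idempotence, evaluating this identity on $s_a\otimes s_b$ yields $T(s_a\otimes s_b)=s_{\pi(a,b)}$. By Lemmas \ref{lem:LatMorph2} and \ref{lem:LatAutoMorph}, $T$ is reversible and linear and so carries refining sets to refining sets, $T(\mathsf{Ref}[\rho])=\mathsf{Ref}[T(\rho)]$. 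Combining these facts,
\[ T(f_a\otimes f_b)=T(\mathsf{Ref}[s_a\otimes s_b])=\mathsf{Ref}[s_{\pi(a,b)}]=f_{\pi(a,b)}, \]
and restricting to $\pi(k,i)=(k,i)$ and $\pi(k,j)=(j,j)$ reproduces exactly the claimed action of $T$.

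I expect the only point requiring genuine care to be the consistency check in step (i): one must ensure no vertex is assigned two images and no two vertices collide onto a single image, and in particular that the prescription does not implicitly force $(j,j)$ to be a fixed point. Once this combinatorial consistency is secured, the remainder is a routine application of the decoherence-to-classical transformation postulate and the face-lattice lemmas, so I anticipate no deeper obstacle.
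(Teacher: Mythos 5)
Your proposal is correct and follows essentially the same route as the paper: exhibit the corresponding permutation of classical vertices, lift it to a reversible transformation via item 3 of Definition \ref{def dec clas}, and use linearity plus reversibility (as in Lemma \ref{lem:isomorphicFaces}) together with Lemma \ref{lem:ProductRefiningSets} to push the action from the vertices $s_a\otimes s_b$ to the faces $f_a\otimes f_b$. The paper simply asserts the existence of the classical permutation where you verify its combinatorial consistency explicitly; that extra check is sound and the rest matches.
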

\proof For a classical bipartite system $\Delta_N\boxtimes \Delta_N$ there is a reversible transformation achieving the permutation of classical states
\[\pi::\begin{cases}s_k\otimes s_i \mapsto s_k\otimes s_i \quad \forall \, i\neq j \\ s_k\otimes s_j \mapsto s_j\otimes s_j\end{cases}\]
therefore by Postulate 3 of Definition 6 there must be a reversible transformation $T_{\pi}$ which has the same action as $\pi$ on the image of $D_\mathcal{S}$. By the same argument as in the proof of Lem.~\ref{lem:isomorphicFaces} this must extend to the relevent refining sets as given by Lem.~\ref{lem:ProductRefiningSets}, hence taking $T=T_{\pi}$ we obtain our result.
\endproof

\begin{lemma} The minimal face containing a set of faces $\{f_i\}_{i=1}^n$ is the convex hull of these faces, $\bigvee_{i=1}^n f_i = \mathsf{Conv}[\{f_i\}_{i=1}^n]$
\end{lemma}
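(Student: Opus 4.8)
The plan is to split the claim into its two inclusions. One direction, $\mathsf{Conv}[\{f_i\}_{i=1}^n]\subseteq\bigvee_{i=1}^n f_i$, is immediate: $\bigvee_i f_i$ is by definition a convex face containing every $f_i$, hence contains their convex hull. All the work is in the reverse inclusion $\bigvee_i f_i\subseteq\mathsf{Conv}[\{f_i\}]$. I would stress at the outset that this reverse inclusion is \emph{false} for general convex sets — two opposite vertices of a square have the whole square as their join but only the diagonal as their convex hull — so the argument must use the bipartite decoherence structure essentially, and in particular that composition is the min-tensor $\boxtimes$; no single-system convex-geometric argument can suffice.

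The key reduction is to extreme points. Writing $f_I:=\bigvee_{i\in I}f_i$ with $I=\{1,\dots,n\}$, any $\sigma\in f_I$ is a finite convex combination of extreme points of $f_I$ (Minkowski/Krein--Milman in finite dimension), and the extreme points of a face are pure states of $\mathcal S$ lying in $f_I$. Hence it suffices to show that \emph{every pure state $\pi$ of $\mathcal S$ decoheres to a pure classical state}, i.e.\ $D_\mathcal{S}[\pi]=s_i$ for a single $i$. Granting this, a pure $\pi\in f_I$ has $D_\mathcal{S}[\pi]\in\mathsf{Conv}[\{s_i\}_{i\in I}]$ forced to be a vertex $s_i$ with $i\in I$, so $\pi\in f_i$ by Lem.~\ref{lem:FaceDecoheres}; thus every extreme point of $f_I$ lies in $\bigcup_{i\in I}f_i$ and the convex combination lies in $\mathsf{Conv}[\{f_i\}]$.

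To prove the reduced claim I would use a \emph{copy} transformation supplied by the classical dynamics on $\mathcal S\boxtimes\mathcal S$. Fix a pure classical reference $s_0$ with face $f_0$, and let $C$ be the reversible lift of the classical controlled-shift $s_c\otimes s_0\mapsto s_c\otimes s_c$; as in the proof of Lem.~\ref{lem:BipartiteTransformation} this maps the refining sets $f_c\otimes f_0\mapsto f_c\otimes f_c$ (using Lem.~\ref{lem:ProductRefiningSets}). Apply $C$ to $\pi\otimes s_0$, which is pure, and invoke the defining property of the min-tensor that its pure states are exactly products of pure states: $C(\pi\otimes s_0)=a\otimes b$ with $a,b$ pure. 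Decohering the second factor gives $(\mathds 1\otimes D_\mathcal{S})C(\pi\otimes s_0)=a\otimes D_\mathcal{S}[b]=\sum_i\lambda_i\,a\otimes s_i$, where $D_\mathcal{S}[b]=\sum_i\lambda_i s_i$, so the $i$-tagged component is $\mu_i:=(\mathds 1\otimes e_i)C(\pi\otimes s_0)=\lambda_i a$.

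The crux is to show each $\mu_i$ lies in the cone over $f_i$, for then $\lambda_i>0$ forces $a\in f_i$. Since $\mu_i$ is a physical sub-normalised state and $f_i$ is exposed by $E_i$ (Lem.~\ref{lem:ExposedFaces1}), it is enough to check $E_i(\mu_i)=0$, i.e.\ that $\phi_i:=(E_i\otimes e_i)\circ C$ annihilates $\pi\otimes s_0$. On classical inputs $\phi_i(s_c\otimes s_0)=E_i(s_c)\,\delta_{ci}=0$ for every $c$ (as $E_i(s_i)=0$ and $e_i(s_c)=\delta_{ci}$), so $\phi_i$ vanishes on the whole face generated by $\{s_c\otimes s_0\}_c$; and $\pi\otimes s_0$ belongs to this face because $\pi$ refines $\tfrac1N\sum_c s_c$ (the minimal face containing $\Delta_N$ being all of $\mathcal S$), whence $\pi\otimes s_0$ refines $\tfrac1N\sum_c s_c\otimes s_0$. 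Transferring the vanishing of $\phi_i$ from the classical product states to the genuinely non-classical $\pi\otimes s_0$ — exploiting that $\{s:e(s)=0\}$ is a face and that the refining set of $\tfrac1N\sum_c s_c$ exhausts $\mathcal S$ — is the step I expect to be the main obstacle. Once it is in hand, $a\in f_i$ for all $i$ with $\lambda_i>0$, and disjointness $f_i\cap f_{i'}=\emptyset$ forces a single $\lambda_{i^*}=1$; then $a,b\in f_{i^*}$, so $\pi\otimes s_0=C^{-1}(a\otimes b)\in f_{i^*}\otimes f_0$ and hence $\pi\in f_{i^*}$ on marginalising, completing the reduced claim and thus the lemma.
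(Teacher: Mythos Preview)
Your argument is correct, and it takes a genuinely different route from the paper's. The paper proceeds by induction on $n$: assuming $\tilde{f}:=\bigvee_{i=1}^{n-1}f_i=\mathsf{Conv}[\{f_i\}_{i=1}^{n-1}]$, it builds an effect $E_n\otimes\tilde{E}+E_1\otimes E_n$ on $\mathcal{S}\boxtimes\mathcal{S}$ whose zero set is exactly $\mathsf{Conv}[\{f_1\otimes\tilde{f},f_n\otimes f_n\}]$, thereby certifying this convex hull as an exposed face; a lifted classical permutation then carries $f_1\otimes(\tilde{f}\lor f_n)$ to this face and $f_1\otimes\mathsf{Conv}[\{\tilde{f},f_n\}]$ back, forcing $\tilde{f}\lor f_n=\mathsf{Conv}[\{\tilde{f},f_n\}]$. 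Your approach instead proves in one shot the structural statement ``every pure state of $\mathcal{S}$ lies in some $f_i$'' by exploiting that extreme points of $\boxtimes$ are products and that the lifted CNOT preserves purity, then reads off the lemma for every subset $I$ at once via Lem.~\ref{lem:FaceDecoheres} and the simplex structure of $\Delta_N$.

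Both arguments hinge on the same three ingredients---the min-tensor hypothesis, the lift of a bipartite classical permutation, and exposing effects pulled back through $D$---but they organise them differently. The paper's inductive effect trick is slick and avoids any appeal to extreme-point structure, while your argument isolates an independently interesting intermediate fact (pure states decohere to classical vertices). The step you flagged as the main obstacle, transferring $\phi_i(s_c\otimes s_0)=0$ to $\phi_i(\pi\otimes s_0)=0$, does go through exactly as you indicate: $\phi_i$ is non-negative on the cone (being $(E_i\otimes e_i)\circ C$ with $C$ cone-preserving), so its zero set is a face; $\bar{s}\otimes s_0$ lies in that face by linearity; and $\pi\otimes s_0\succ\bar{s}\otimes s_0$ because $\mathsf{Ref}[\bar s]=\bigvee_i f_i=\Omega_\mathcal{S}$ (Lem.~\ref{lem:ExposedFaces} with $I=\{1,\dots,N\}$, together with Result~1.6). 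One small point worth making explicit in a write-up: your $e_i$ should be taken as $\epsilon_i\circ D_\mathcal{S}$ (rather than merely the promised lift $\epsilon_i$), so that $e_i(b)=\lambda_i$ holds for \emph{arbitrary} $b$ and not just classical $b$.
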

\proof
We prove this result via induction, the $n=1$ case is trivial as it simply states that $f_i=\mathsf{Conv}[\{f_i\}]$. Now for the induction, assume that $\tilde{f}:=\bigvee_{i=1}^{n-1}f_i = \mathsf{Conv}[\{f_i\}_{i=1}^{n-1}]$ then {Lem.~\ref{lem:ExposedFaces} implies this face is exposed, we denote the effect that exposes the face as $\tilde{E}$, and note that this means that  $\tilde{E}(s)=0 \iff s \in \tilde{f}$}. Now define the effect
\[e:=E_n\otimes \tilde{E} + E_1 \otimes E_n\]
and note that $e(a\otimes b)= 0 \iff a\otimes b \in \{\tilde{f}\otimes f_1,f_n\otimes f_n\}$. Hence, $\mathsf{Conv}[\{f_1\otimes \tilde{f},f_n\otimes f_n\}]$ is a face and so we have
\[f_i\otimes \tilde{f} \lor f_n\otimes f_n = \mathsf{Conv}[\{\tilde{f}\otimes f_1,f_n\otimes f_n\}].\]

Now consider the transformation $T$ such that,
\[T::\left\{\begin{array}{lc}f_1\otimes f_i \mapsto f_i\otimes f_i & i=1,...,n-1 \\ f_1\otimes f_n \mapsto f_n\otimes f_n & \end{array}\right.\]
which exists by Lem.~\ref{lem:BipartiteTransformation}. Therefore we have
\[\begin{array}{rl}f_1\otimes (\tilde{f}\lor f_n) & \stackrel{T}{\cong}f_i\otimes \tilde{f}_i \lor f_n\otimes f_n\\ & = \mathsf{Conv}[\{f_1\otimes\tilde{f},f_n\otimes f_n\}]\\ & \stackrel{T}{\cong} f_1\otimes \mathsf{Conv}[\{\tilde{f},f_n\}]\end{array}\]
Hence we have \[\bigvee_{i=1}^n f_i = \tilde{f}\lor f_n = \mathsf{Conv}[\{\tilde{f},f_n\}]= \mathsf{Conv}[\{f_i\}_{i=1}^n]\] as we required.
\endproof

\begin{lemma}\label{lemma: non identical spans}
If the spans of $f_i$ and $f_j$ are equal then they must be the same face: $\mathsf{Span}[f_i]\cap \mathsf{Span}[f_j]=\mathsf{Span}[f_i]\ \iff\  i=j$.
\end{lemma}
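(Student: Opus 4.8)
The plan is to establish the forward (nontrivial) direction by contradiction, while the reverse direction ($i=j \Rightarrow \mathsf{Span}[f_i]\cap\mathsf{Span}[f_j]=\mathsf{Span}[f_i]$) is immediate, since the intersection of a subspace with itself is itself. Note first that the hypothesis $\mathsf{Span}[f_i]\cap \mathsf{Span}[f_j]=\mathsf{Span}[f_i]$ is just the inclusion $\mathsf{Span}[f_i]\subseteq \mathsf{Span}[f_j]$, so the goal reduces to showing that this inclusion can only hold when $i=j$.

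The key tool I would use is the exposing effect $E_j$ supplied by Lem.~\ref{lem:ExposedFaces1}: a linear functional with $E_j(s)\geq 0$ for all $s\in\mathcal{S}$ and $E_j(s)=0 \iff s\in f_j$. The first step is to observe that, since $E_j$ vanishes on every state of $f_j$ and $E_j$ is linear, it must vanish on the entire linear span $\mathsf{Span}[f_j]$.

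Now I would suppose, for contradiction, that $i\neq j$ yet $\mathsf{Span}[f_i]\subseteq \mathsf{Span}[f_j]$. Then $E_j$ vanishes on $\mathsf{Span}[f_i]$, and in particular $E_j(s)=0$ for every $s\in f_i$. But $E_j(s)=0 \iff s\in f_j$, so this forces $f_i\subseteq f_j$. Since $f_i$ always contains its associated classical vertex $s_i$ it is nonempty, and since $i\neq j$ the disjointness of distinct refining faces gives $f_i\cap f_j=\emptyset$, contradicting $f_i\subseteq f_j$. Hence no such inclusion can hold when $i\neq j$, and the forward direction follows.

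The argument is short, and the only real subtlety I would flag is the passage from ``$E_j$ vanishes on the states of $f_j$'' to ``$E_j$ vanishes on $\mathsf{Span}[f_j]$'': this uses that $E_j$ is a genuine linear functional on the ambient vector space, so its zero set is a linear subspace, and any subspace containing $f_j$ contains $\mathsf{Span}[f_j]$. Everything else follows from the exposed-face property and the disjointness of the $f_i$ already established for single systems; in particular, none of the bipartite machinery is needed here.
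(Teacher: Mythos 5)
Your proof is correct and follows essentially the same route as the paper's: both hinge on applying the exposing effect of Lem.~\ref{lem:ExposedFaces1} linearly across a span to force a containment of faces, and then concluding via the disjointness of distinct $f_i$. Your version is marginally leaner in working directly with the one-sided inclusion $\mathsf{Span}[f_i]\subseteq\mathsf{Span}[f_j]$, whereas the paper first uses $f_i\cong f_j$ to upgrade the hypothesis to an equality of spans before arguing both containments.
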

\proof
Firstly note that as $f_i\cong f_j$ the dimension of the spans are the same, hence this is equivalent to the statement that
\[\mathsf{Span}[f_i]=\mathsf{Span}[f_j] \ \ \iff \ \ i=j.\]
The $\Leftarrow$ direction is trivial so we focus on the $\Rightarrow$ direction. Consider some state $\chi \in f_j$, by assumption we have $\chi \in \mathsf{Span}[f_i]$ and so $\chi = \sum_\alpha x_\alpha \chi_\alpha$ where $x_\alpha \in \mathds{R}$ and $\chi_\alpha \in f_i$. Therefore $E_i(\chi)=\sum_\alpha x_\alpha E_i(\chi_\alpha)=0$ and so $\chi \in f_j$. This is true for any state of $f_i$ and an identical proof holds with $i$ and $j$ swapped, therefore $f_i=f_j$.
\endproof

\begin{lemma} $\mathsf{Span}[f_i] \cap \mathsf{Span}[\bigcup\limits_{k\neq i} f_k] = \{ 0 \}\ \ \forall i$
\end{lemma}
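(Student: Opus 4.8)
The plan is to recast the statement as a dimension count and prove it by induction on the number of faces, routing every candidate dependence through the bipartite system $\mathcal S\boxtimes\mathcal S$ so that the reversible transformations guaranteed by the full classical dynamics on $\Delta_{N^2}$ can act on it. Write $V_i:=\mathsf{Span}[f_i]$; by Lem.~\ref{lem:isomorphicFaces} these all share a dimension $m$. Since $\mathsf{Span}[\bigcup_{k\neq i}f_k]=\sum_{k\neq i}V_k$, the claim for all $i$ is precisely the statement that the sum $V_1+\cdots+V_N$ is direct, i.e. $\dim(V_1+\cdots+V_N)=Nm$. The conceptual obstacle --- and the reason a single-system argument cannot succeed --- is that any such dependence is purely \emph{internal}: if $0\neq v\in V_i\cap\sum_{k\neq i}V_k$, applying $D$ and using linear independence of the classical vertices $\{s_k\}$ forces all the weights to vanish, so the relation lies in $\ker D$ and is invisible to $D$. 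It can only be excluded bipartitely.

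First I would treat pairs. For $i\neq j$ the span of $f_i\otimes(f_i\vee f_j)$ is $V_i\otimes(V_i+V_j)$. By Lem.~\ref{lem:BipartiteTransformation} the classical transposition $s_i\otimes s_j\leftrightarrow s_j\otimes s_j$ (fixing all other product vertices) lifts to a reversible $T$, which by Lem.~\ref{lem:ProductRefiningSets} and the face-isomorphism argument of Lem.~\ref{lem:isomorphicFaces} maps this set onto $\mathsf{Conv}[\{f_i\otimes f_i,f_j\otimes f_j\}]$, of span $(V_i\otimes V_i)+(V_j\otimes V_j)$. As $T$ is a global linear isomorphism the two spans have equal dimension; with $\delta:=\dim(V_i\cap V_j)$, using $\dim(A\otimes B)=\dim A\,\dim B$ and the subspace identity $(V_i\otimes V_i)\cap(V_j\otimes V_j)=(V_i\cap V_j)\otimes(V_i\cap V_j)$, this reads $m(2m-\delta)=2m^2-\delta^2$, i.e. $\delta(\delta-m)=0$. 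The branch $\delta=m$ would give $V_i=V_j$, excluded by Lemma~\ref{lemma: non identical spans}; hence $\delta=0$ and the faces are pairwise independent.

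For the induction, assume $f_1,\dots,f_{n-1}$ are independent, so $\tilde V:=V_1+\cdots+V_{n-1}$ has dimension $(n-1)m$, and set $\tilde f:=\bigvee_{k<n}f_k$. The span of $f_1\otimes(\tilde f\vee f_n)$ is $V_1\otimes(\tilde V+V_n)$ by the lemma establishing $\bigvee_i f_i=\mathsf{Conv}[\{f_i\}]$. Lifting the single transposition $s_1\otimes s_n\leftrightarrow s_n\otimes s_n$ (all other product vertices fixed) gives a reversible $T$ sending $f_1\otimes f_n\mapsto f_n\otimes f_n$ and fixing each $f_1\otimes f_k$ ($k<n$) as a face, hence mapping the set onto $\mathsf{Conv}[\{f_1\otimes f_1,\dots,f_1\otimes f_{n-1},f_n\otimes f_n\}]$, of span $(V_1\otimes\tilde V)+(V_n\otimes V_n)$. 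Equating dimensions and using the tensor-intersection identity, with $\delta_n:=\dim(\tilde V\cap V_n)$, gives
\[m(nm-\delta_n)=(n-1)m^2+m^2-\dim\big[(V_1\cap V_n)\otimes(\tilde V\cap V_n)\big].\]
Pairwise independence already gives $V_1\cap V_n=\{0\}$, so the final term is zero and the identity collapses to $m(nm-\delta_n)=nm^2$, forcing $\delta_n=0$, i.e. $\tilde V\cap V_n=\{0\}$.

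Iterating from $n=2$ to $N$ shows $V_1+\cdots+V_N$ is direct, which is exactly $V_i\cap\sum_{k\neq i}V_k=\{0\}$ for every $i$. I expect the delicate points to be bookkeeping rather than conceptual: checking that the lifted $T$ acts as claimed \emph{at the level of spans} (that it carries $\mathsf{Span}[f_1\otimes f_n]$ onto $\mathsf{Span}[f_n\otimes f_n]$ and preserves each other block's span), which follows from Lem.~\ref{lem:ProductRefiningSets} and Lem.~\ref{lem:isomorphicFaces}, and invoking the subspace identity $(A\otimes B)\cap(C\otimes D)=(A\cap C)\otimes(B\cap D)$. The one genuinely essential ingredient is the bipartite reduction itself: because the dependence lives in $\ker D$, only the extra reversible transformations on $\Delta_{N^2}$ --- together with Lemma~\ref{lemma: non identical spans} to kill the degenerate $\delta=m$ branch --- can detect and forbid it.
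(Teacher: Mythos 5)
Your proof is correct and rests on exactly the same machinery as the paper's: the lifted classical transposition $s_1\otimes s_n \leftrightarrow s_n\otimes s_n$ from Lem.~\ref{lem:BipartiteTransformation}, the subspace identity $(A\otimes B)\cap(C\otimes D)=(A\cap C)\otimes(B\cap D)$, and Lemma~\ref{lemma: non identical spans} to kill the degenerate branch. The only difference is organizational: the paper runs the comparison once, directly against $C=\mathsf{Span}[\bigcup_{k\neq i}f_k]$, equating the intersection $A\otimes(B\cap C)$ with its image $(A\cap B)\otimes(B\cap C)$ and concluding $B\cap C=\{0\}$ in one shot, whereas you reach the identical conclusion via a pairwise base case and induction on the number of faces.
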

\proof
First note that given vector spaces $U$ and $V$ with subspaces $A,B \subseteq U$ and $C,D \subseteq V$ that
\[(A\otimes C)\cap (B\otimes D)=(A\cap B)\otimes (C\cap D).\] (see for example Lemma 1.4.5 of \cite{dascalescu2000hopf}). Now consider the transformation $T$ from Lem.~\ref{lem:BipartiteTransformation}.\[T::\begin{cases} f_1\otimes f_i \mapsto f_i\otimes f_i \quad 1\neq i  \\ f_1\otimes f_k \mapsto f_1\otimes f_k \quad \forall \, k\neq i\end{cases}\]
Define $A=\mathsf{Span}[f_1]$, $B:=\mathsf{Span}[f_i]$ and $C:=\mathsf{Span}[\bigcup_{k\neq i} f_k]$. As $T$ is a linear reversible transformation it preserves the intersection of these subspaces. Equating the intersection of the subspaces in the pre-image and the image under $T$ gives
\[\begin{array}{rl}A\otimes (B\cap C) & \stackrel{T}{\cong} (B\otimes B)\cap (A\otimes C)\\  & = (A\cap B)\otimes (B\cap C)\end{array}\]
This can only be satisfied if $A\cap B=A$ or $B\cap C =\{ 0 \}$. The first case is ruled out by Lemma \ref{lemma: non identical spans} whereas the second case implies that $\mathsf{Span}[f_i]\cap \mathsf{Span}[\bigcup _{k\neq i} f_k]=\{ 0 \}$ as we require.
\endproof

\begin{result}
To summarise, from considering decoherence for bipartite systems $\mathcal{S}\boxtimes\mathcal{S}$ which decohere via $D_\mathcal{S}\boxtimes D_\mathcal{S}$ to $\Delta_N\boxtimes\Delta_N=\Delta_{N^2}$ we have derived the following properties of the theory.
\begin{enumerate}
\item The classical faces in Result 1 $f_i$ have linearly independent spans  $\mathsf{Span}[f_i] \cap \mathsf{Span}[\bigcup\limits_{k\neq i} f_k] =\{0\}\ \ \forall i$
\item The minimal face containing a set of faces $\{f_i\}$ is just the convex hull of these faces, $\bigvee_i f_i = \mathsf{Conv}[\{f_i\}]$
\end{enumerate}
\end{result}

\subsection{Proof of main theorem}
We can now combine Results 1 \& 2 to prove the main theorem, restated here for convenience:

\

\noindent \emph{Theorem 1}:
If a theory decoheres to classical theory (Def. 6) then each state space $\Omega_\mathcal{S}$ -- which decoheres via $D_\mathcal{S}$ to $\Delta_{N(\mathcal{S})}$ -- is of the form $\mathcal{A}_{\mathcal{S}}\boxtimes\Delta_{N(\mathcal{S})}$ and the decoherence map is $D_\mathcal{S}=(x_\mathcal{S}\circ u_\mathcal{S})_{\mathcal{A}_{\mathcal{S}}}\boxtimes \mathds{1}_{\Delta_{N(\mathcal{S})}}$, where $x_\mathcal{S}$ is some fixed state of the system and $u_\mathcal{S}$ is the discarding map for the system.

\
\proof
Given the previous results we have shown that the state space can be written as $\Omega_\mathcal{S}=\bigvee_{i=1}^N f_i$ where the $f_i$ are linearly independent. Moreover, $\bigvee_{i=1}^n f_i = \mathsf{Conv}[\{f_i\}_{i=1}^n]$. The conjunction of these conditions implies that the cone generated by $\mathcal S$ decomposes into the direct sum \citep{barker1981theory}
\[\Omega_\mathcal{S}=\bigoplus_{i=1}^N f_i\]
Moreover, as the $f_i$ are all isomorphic we can define $\mathcal{A}\cong f_i \ \ \forall \ i$. Consider the state space \[ \mathcal S' = \mathcal A \boxtimes \Delta_N \] Using $\Delta_M = \bigoplus\limits_{i=1}^N p_i$ where $p_i$ is a zero-dimensional face, and distributivity of $\boxtimes$ over $\oplus$ (Lemma \ref{lem: distribute}) we can write
\[\Omega_\mathcal{S'}=\bigoplus\limits_{i=1}^N \mathcal A \boxtimes  p_i \]
Finally, as $p_i$ is zero dimensional, $p_i\boxtimes \mathcal A\cong \mathcal A$ and therefore
\[\Omega_\mathcal{S'}\cong \bigoplus\limits_{i=1}^N\mathcal A_i = \Omega_\mathcal{S} \]
That is, $\mathcal S$ is just the composite of some possibly non-classical state space $\mathcal{A}$ with a classical state space $\Delta_N$ where the state of $\Delta_N$ labels which state $s_i$ a state decoheres to. Therefore, the output state of the decoherence process depends only on the state of the classical component. Hence, can be written in the required form,
\[D_\mathcal{S} = (x\circ u)_\mathcal{A}\otimes\mathds{1}_{\Delta_N}.\]
where $u$ is the discarding effect and $x$ is some fixed state of $\mathcal{A}$.
\endproof

\end{document}